\newtheorem {theorem}{Theorem}[section]
\newtheorem {lemma}[theorem]{Lemma}
\newtheorem {corollary}[theorem]{Corollary}
\theoremstyle {definition}
\newtheorem* {definition}{Definition}
\newtheorem {remark}[theorem]{Remark}
\newtheorem{fact}[theorem]{Fact}
\newcommand{\val}{\textrm{val}}
\newcommand{\CONF}{\textrm{CONF}}
\newcommand{\OPT}{\textrm{OPT}}
\newcommand{\VAL}{\textrm{VAL}}
\title{$\frac{13}{9}$-approximation for Graphic TSP\footnote{Note that this is a second version of this paper and it has been updated with new results, starting from Section~\ref{sec:bound2}.}}
\date{}
\author{Marcin Mucha\thanks{This research is partially supported by a grant from the Polish Ministry of Science and Higher
Education, project N206 355636.}\\
Institute of Informatics, University of Warsaw, Poland \\
\texttt{mucha@mimuw.edu.pl}
}
\begin{document}

\maketitle

\begin{abstract}
The Travelling Salesman Problem is one the most fundamental and most studied problems in approximation algorithms.
For more than 30 years, the best algorithm known for general metrics has been Christofides's algorithm with approximation factor of $\frac{3}{2}$,
even though the so-called Held-Karp LP relaxation of the problem is conjectured to have the integrality gap of only $\frac{4}{3}$.
Very recently, significant progress has been made for the important special case of graphic metrics, first by Oveis Gharan et al.~\cite{singh}, 
and then by M\"omke and Svensson~\cite{momke}. In this paper, we provide an improved analysis of the approach presented in~\cite{momke} yielding 
a bound of $\frac{13}{9}$ on the approximation factor, as well as a bound of $\frac{19}{12}+\varepsilon$ for any $\varepsilon>0$ for a more general Travelling Salesman Path Problem in graphic metrics.
\vspace{0.3cm}

\noindent {\bf Subject Classification:} approximation algorithms, travelling salesman problem
\end{abstract}

\section{Introduction and related work}
The Travelling Salesman Problem (TSP) is one the most fundamental and most studied problems in combinatorial optimization, and aproximation algorithms in particular. In the most standard version of the problem, we are given a metric $(V,d)$ and the goal is to find a closed tour that visits each point of $V$ exactly once and has minimum total cost, as measured by $d$. This problem is APX-hard, and the best known approximation factor of $\frac{3}{2}$ was obtained by Christofides~\cite{christofides} more than thirty years ago. However, the so-called Held-Karp LP relaxation of TSP is conjectured to have an integrality gap of $\frac{4}{3}$. It is known to have a gap at least that big, however the best known upper bound~\cite{shmoys} for the gap is given by Christofides's algorithm and equal to $\frac{3}{2}$.

In a more general version of the problem, called the Travelling Salesman Path Problem (TSPP), in addition to a metric $(V,d)$ we are also given two points $s,t \in V$ and the goal is to find a path from $s$ to $t$ visiting each point exactly once, except if $s$ and $t$ are the same point in which case it can be visited twice (this is when TSPP reduces to TSP). For this problem, the best approximation algorithm known is that of Hoogeveen~\cite{hoogeveen} with approximation factor of $\frac{5}{3}$. However, the Held-Karp relaxation of TSPP is conjectured to have an integrality gap of $\frac{3}{2}$.

One of the natural directions of attacking these problem is to consider special cases and several attempts of this nature has been made. The most interesting one is by far the graphic TSP/TSPP, where we assume that the given metric is the shortest path metric of an undirected graph. Equivalently, in graphic TSP we are given an undirected graph $G=(V,E)$ and we need to find a shortest tour that visits each vertex \emph{at least once}. Yet another formulation would ask for a minimum size Eulerian multigraph spanning $V$ and only using edges of $G$. Similar formulations apply to the graphic TSPP case. The reason why these special cases are very interesting is that they seem to include the difficult inputs of TSP/TSPP. Not only are they APX-hard (see~\cite{grigni}), but also the standard examples showing that the Held-Karp relaxation has a gap of at least $\frac{4}{3}$ in the TSP case and $\frac{3}{2}$ in the TSPP case, are in fact graphic metrics.

Very recently, significant progress has been made in approximating the graphic TSP and TSPP. First, Oveis Gharan et al.~\cite{singh} gave an algorithm with an approximation factor $\frac{3}{2}-\varepsilon$ for graphic TSP. Despite $\varepsilon$ being of the order of $10^{-12}$, this is considered a major breakthrough. Following that,
M\"omke and Svensson~\cite{momke} obtained a significantly better approximation factor of $\frac{14(\sqrt{2}-1)}{12\sqrt{2}-13} \approx 1.461$ for graphic TSP, as well as factor $3-\sqrt{2}+\varepsilon \approx 1.586+\varepsilon$ for graphic TSPP, for any $\varepsilon>0$. Their approach uses matchings in a truly ingenious way. Whereas most earlier approaches (including that of Christofides~\cite{christofides} as well as Oveis Gharan et al.~\cite{singh}) add edges of a matching to a spanning tree to make it Eulerian, the new approach is based on adding and removing the matching edges. This process is guided by a so-called removable pairing of edges which essentially encodes the information on which edges can be simultanously removed from the graph without disconnecting it. A large removable pairing of edges is found by computing a minimum cost circulation in a certain auxiliary flow network, and the bounds on the cost of this circulation translate into bounds on the size of the resulting TSP tour/path.

\subsection{Our results}
In this paper we present an improved analysis of the cost of the circulation used in the construction of the TSP tour/path. Our results imply a bound of $\frac{13}{9} \approx 1.444$ on the approximation factor for the graphic TSP, as well as a $\frac{19}{12}+\varepsilon \approx 1.583+\varepsilon$ bound for the graphic TSPP, for any $\varepsilon>0$. The circulation used in~\cite{momke} consists of two parts: the "core" part based on an extreme optimal solution to the Held-Karp relaxation of TSP, and the "correction" part that adds enough flow to the core part to make it feasible. We improve bounds on costs of both part, in particular we show that the second part is in a sense free. As for the first part, similarly to the original proof of M\"omke and Svensson, our proof exploits its knapsack-like structure. However, we use the 2-dimensional knapsack problem in our analysis, instead of the standard knapsack problem. Not only does this lead to an improved bound, it is also in our opinion a cleaner one. In particular, we also provide a supplementary essentially matching lower bound on the cost of the core part, which means that any further progress on bounding that cost has to take into account more than just the knapsack-like structure of the circulation. 

\subsection{Organization of the paper}
In the next section we present previous results relevant to the contributions of this paper, in particular we recall key definitions and theorems of M\"omke and Svensson~\cite{momke}. In Section~\ref{sec:bound1} we present the improved upper bound on the cost of the core part of the circulation, as well as an almost matching lower bound. In Section~\ref{sec:bound2} we prove that the correction part of the circulation is essentially free. Finally, in Section~\ref{sec:applications} we apply the results of the previous sections to obtain improved approximation algorithms for graphic TSP and TSPP.

\section{Preliminaries}
In this section we review some standard results concerning TSP/TSPP approximation and recall the parts of the work of M\"omke and Svensson~\cite{momke} relevant to the contributions of this paper. Note that large parts of the material presented in~\cite{momke} are omitted entirely or collapsed to a single theorem statement. A reader interested in a more detailed and complete exposition is advised to read the original paper instead.

\paragraph{Held-Karp Relaxation and the Algorithm of Christofides.} The Held-Karp relaxation (or subtour elimination LP) for graphic TSP on graph
$G=(V,E)$ can be formulated as follows (see~\cite{held,goemans,momke} for details on equivalence between different formulations):
\[ \min \sum_{e \in E} x_e \textrm{ subject to } x(\delta(S)) \ge 2 \textrm{ for } \emptyset \neq S \subset V, \textrm{ where } x_e \ge 0 .\]
Here $\delta(S)$ denotes the set of all edges between $S$ and $V\setminus S$ for any $S \subseteq V$, and $x(F)$ denotes $\sum_{e \in F} x_e$ for any $F \subseteq E$.

We will refer to this LP as $LP(G)$ and denote the value of any of its optimal solutions by $\OPT_{LP}(G)$.

The approximation ratio of the classic $\frac{3}{2}$-approximation algorithm for metric TSP due to Christofides~\cite{christofides} is in fact related to $\OPT_{LP}(G)$
as follows:
\begin{theorem}\label{thm:christofides}[Shmoys, Williamson~\cite{shmoys}]
The cost of the solution produced by the algorithm of Christofides on a graph $G$ is bounded by $n+\OPT_{LP}(G)/2$, and so its approximation factor is at most 
\[ \frac{n + \OPT_{LP}(G)/2}{\OPT_{LP}(G)} .\]
\end{theorem}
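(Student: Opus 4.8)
The statement to prove is Theorem~\ref{thm:christofides}: Christofides's algorithm on a graphic metric produces a tour of cost at most $n + \OPT_{LP}(G)/2$. My plan is to follow the classical analysis of Christofides's algorithm, but to bound the two constituent pieces—the spanning tree and the parity-correction matching—against $n$ and $\OPT_{LP}(G)/2$ respectively, rather than against the optimum tour. The key observation is that for a \emph{connected} graph $G$ on $n$ vertices, any spanning tree has exactly $n-1$ edges, each of unit length in the graphic metric, so the tree contributes at most $n-1 < n$; this is where the first term comes from, and it is essentially free, requiring no LP machinery at all.

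**Carrying out the steps.**

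First I would recall that Christofides's algorithm computes a minimum spanning tree $T$ of $G$, identifies the set $W$ of odd-degree vertices of $T$ (which has even cardinality), computes a minimum-cost perfect matching $M$ on $W$ in the metric, and then shortcuts an Eulerian tour of $T \cup M$. The cost of the output is at most $\mathrm{cost}(T) + \mathrm{cost}(M)$ by the triangle inequality. Second, since $G$ is connected, $\mathrm{cost}(T) = n-1$ in the shortest-path metric. Third, and this is the crux, I must show $\mathrm{cost}(M) \le \OPT_{LP}(G)/2$. The standard argument: take an optimal fractional solution $x$ to $LP(G)$; by the parity (odd-cut) constraints one shows that the vector $\frac{1}{2}x$ dominates, in the sense of lying in the convex hull, a fractional perfect matching on the terminal set $W$. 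More precisely, one uses the fact that $x/2$ satisfies $x(\delta(S))/2 \ge 1$ for every set $S$ cutting $W$ into two odd halves (indeed every cut has $x$-value at least $2$), together with the nonnegativity constraints; the polyhedral characterization of the perfect matching polytope on $W$—or equivalently an application of the parsimonious property / a theorem of the type used by Wolsey and by Shmoys–Williamson—then yields a convex combination of perfect matchings of average cost at most $\mathrm{cost}(x)/2 = \OPT_{LP}(G)/2$, so the minimum-cost perfect matching $M$ costs at most that. Combining the three bounds gives $\mathrm{cost}(\text{tour}) \le (n-1) + \OPT_{LP}(G)/2 < n + \OPT_{LP}(G)/2$, and dividing by $\OPT_{LP}(G)$ gives the stated approximation factor.

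**The main obstacle.**

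The one step that needs genuine care is the bound $\mathrm{cost}(M) \le \OPT_{LP}(G)/2$, i.e. showing that half of a feasible Held–Karp solution ``pays for'' a perfect matching on the odd-degree vertices of the spanning tree. The subtlety is that the odd-degree set $W$ of the minimum spanning tree can be an arbitrary even-sized subset of $V$, and one must verify that $x/2$ restricted to the relevant cuts satisfies all the perfect-matching polytope inequalities for $W$—including the odd-set (blossom) inequalities $x(\delta(S))/2 \ge 1$ for odd $|S \cap W|$, which follow from $x(\delta(S)) \ge 2$, and the handling of vertices outside $W$, which is done via the parsimonious property of the Held–Karp LP (shortest-path splitting-off, or equivalently that one may assume degree exactly $2$ at internal vertices without increasing cost). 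Once that polyhedral fact is in hand, the rest is the routine Christofides accounting; everything else in the argument is immediate from connectivity of $G$ and the triangle inequality.
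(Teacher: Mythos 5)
Your proposal is correct: the paper does not prove this statement but simply cites Shmoys--Williamson, and your argument (spanning tree costs $n-1$ in the graphic metric, plus the classical Wolsey/Shmoys--Williamson fact that half of a feasible Held--Karp solution lies in the $W$-join dominant and hence pays for the parity-correcting matching) is exactly the standard reasoning behind the cited bound. The only cosmetic difference is that Shmoys and Williamson derive the matching bound via a monotonicity property of the Held--Karp bound rather than the $T$-join/parsimonious route you sketch, but both yield $\mathrm{cost}(M)\le \OPT_{LP}(G)/2$ and the same conclusion.
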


The Held-Karp relaxation can be generalized to the graphic TSPP in a straightforward manner. Suppose we want to solve the problem for a graph $G=(V,E)$ and endpoints $s,t$. Let $\Phi = \{S \subseteq V :  |\{s,t\} \cap S| \neq 1\}.$ Then the relaxation can be written as
\[
	\begin{array}{lrcll}
	\min   & \sum_{e \in E} x_e &      &   &\\
	\textrm{subject to} & x(\delta(S))    & \ge    & 2 & \textrm{ for } S \in \Phi \\
			    & x(\delta(S))     & \ge & 1 & \textrm{ for } S \not\in \Phi \\
                & x_e              & \ge & 0 & \textrm{ for } e \in E
	\end{array}
\]
We denote this generalized program by $LP(G,s,t)$ and its optimum value by $\OPT_{LP}(G,s,t)$. It is clear that $\OPT_{LP}(G,v,v) = \OPT_{LP}(G)$ for any $v \in V$.

Let $G' = (V,E \cup \{e'\})$, where $e' = \{s,t\}$. From any feasible solution to $LP(G,s,t)$ we can obtain a feasible solution to $LP(G')$ by adding 1 to $x_{e'}$. Therefore
\begin{fact}
$\OPT_{LP}(G,s,t) \ge \OPT_{LP}(G')-1.$
\end{fact}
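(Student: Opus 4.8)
The plan is to make precise the construction already hinted at in the text: take an optimal fractional solution of $LP(G,s,t)$ and lift it to a feasible solution of $LP(G')$ by putting value exactly $1$ on the new edge $e'=\{s,t\}$. Concretely, I would let $x$ be an optimal solution to $LP(G,s,t)$, so that $\sum_{e\in E} x_e = \OPT_{LP}(G,s,t)$, and define $x'$ on $E\cup\{e'\}$ by $x'_e = x_e$ for $e\in E$ and $x'_{e'}=1$. Then $x'\ge 0$ and $\sum_{e\in E\cup\{e'\}} x'_e = \OPT_{LP}(G,s,t)+1$, so it remains only to verify that $x'$ is feasible for $LP(G')$.

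The one point to check is that $x'$ satisfies every subtour constraint of $LP(G')$, i.e.\ $x'(\delta(S))\ge 2$ for every $\emptyset\neq S\subset V$ (with the cut taken in $G'$), and this is exactly a case analysis on whether $S$ separates the two endpoints $s$ and $t$ — which is precisely the distinction encoded by $\Phi$. If $|\{s,t\}\cap S|\neq 1$, that is $S\in\Phi$, then $e'$ does not cross the cut, so $x'(\delta(S)) = x(\delta(S)) \ge 2$ by the feasibility of $x$ for the constraints of $LP(G,s,t)$ indexed by $\Phi$. If instead $|\{s,t\}\cap S| = 1$, that is $S\notin\Phi$, then $e'$ does cross the cut, so $x'(\delta(S)) = x(\delta(S)) + 1 \ge 1 + 1 = 2$, using the weaker constraint $x(\delta(S))\ge 1$ valid for $S\notin\Phi$. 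In both cases the constraint of $LP(G')$ is satisfied.

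Hence $x'$ is feasible for $LP(G')$, and therefore $\OPT_{LP}(G') \le \sum_{e} x'_e = \OPT_{LP}(G,s,t)+1$, which rearranges to the claimed inequality $\OPT_{LP}(G,s,t) \ge \OPT_{LP}(G')-1$. I do not expect any genuine obstacle here: the entire content is the two-case argument above, and the choice of weight $1$ on $e'$ is exactly calibrated so that a cut separating $s$ from $t$, which only needed value $1$ in $LP(G,s,t)$, now meets the uniform demand of $2$ in $LP(G')$. The proof should be only a few lines.
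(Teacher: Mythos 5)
Your proof is correct and is essentially the paper's argument: the paper simply asserts that adding $1$ to $x_{e'}$ turns any feasible solution of $LP(G,s,t)$ into a feasible solution of $LP(G')$, and your two-case check over $S\in\Phi$ versus $S\notin\Phi$ is exactly the verification it leaves implicit. The only nuance is that if $\{s,t\}$ already belongs to $E$ you should \emph{add} $1$ to $x_{e'}$ (as the paper phrases it) rather than set it to $1$, so that the same cut computation and objective-value count go through unchanged.
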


\paragraph{Reduction to Minimum Cost Circulation.} The authors of~\cite{momke} use the optimal solution of $LP(G)$ to construct a low cost circulation in a certain auxiliary
flow network. This circulation is then used to produce a small TSP tour for $G$. We will now describe the construction of the flow network and the relationship between the 
cost of the circulation and the size of the TSP tour.

Let us start with the following reduction
\begin{lemma}[Lemma 2.1 and Lemma 2.1(generalized) of M\"omke and Svensson~\cite{momke}]
\label{lem:two-connect}
If there exists a polynomial time algorithm that for any 2-vertex connected graph $G$ returns a graphic TSP solution of cost at most $r \cdot \OPT_{LP}(G)$, then there exists an algorithm that does the same for any connected graph. Similarly, if there exists a polynomial time algorithm that for any 2-vertex connected graph $G$ and its two vertices $s,t$ returns a graphic TSPP solution of cost at most $r \cdot \OPT_{LP}(G,s,t)$, then there exists an algorithm that does the same for any connected graph.
\end{lemma}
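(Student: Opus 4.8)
The plan is to reduce the connected case to the 2-vertex-connected case by decomposing $G$ along its cut vertices into its blocks (maximal 2-vertex-connected subgraphs, plus bridge edges treated as trivial blocks), solving each block separately with the assumed algorithm, and gluing the partial solutions together at the cut vertices. First I would recall the block-cut-tree structure: the blocks of $G$ form a tree $T$ whose other nodes are the cut vertices of $G$, with a block $B$ adjacent in $T$ to a cut vertex $v$ iff $v \in B$. The key structural observation I would use is that $\OPT_{LP}$ behaves additively across this decomposition: any optimal solution $x$ to $LP(G)$ restricted to the edges of a block $B$ is feasible for $LP(B)$ (a set $S \subsetneq V(B)$ has the same boundary in $B$ as $S$ or $V(B)\setminus S$ does in $G$, since edges leaving $B$ all pass through the single cut vertex joining $B$ to the rest), and conversely optimal block solutions can be assembled into a feasible solution for $LP(G)$; hence $\OPT_{LP}(G) = \sum_{B} \OPT_{LP}(B)$, where the sum is over blocks (a bridge edge contributes exactly $2$, matching that its single edge must be used at least twice, or we may simply note bridges contribute $1$ to the LP but must be doubled in any tour — I will need to be slightly careful here and may prefer to contract bridges or handle them as a base case).

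Next I would run the assumed $r$-approximation on each nontrivial block $B$ to obtain a graphic TSP solution (Eulerian multigraph) $H_B$ on $V(B)$ of cost at most $r\cdot\OPT_{LP}(B)$, and for each bridge take the doubled edge. Taking the union $H = \bigcup_B H_B$ over all blocks, I claim $H$ is a connected Eulerian multigraph spanning $V$: it spans $V$ since every vertex lies in some block; it is connected because the block-cut tree is connected and consecutive blocks share a vertex; and every vertex has even degree in $H$ because its degree is the sum of its (even) degrees in each block containing it. Therefore $H$ is a valid graphic TSP solution for $G$, of cost $\sum_B \mathrm{cost}(H_B) \le r\sum_B \OPT_{LP}(B) = r\cdot\OPT_{LP}(G)$, and the whole procedure runs in polynomial time since there are at most $n$ blocks each of size at most $n$.

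For the TSPP statement the argument is analogous but requires tracking the endpoints $s,t$ through the decomposition. Consider the unique path $P = B_1, c_1, B_2, c_2, \ldots, c_{k-1}, B_k$ in the block-cut tree from the block containing $s$ to the block containing $t$ (choosing, if $s$ or $t$ is itself a cut vertex, any incident block). For each block $B_i$ on $P$ I would call the TSPP algorithm with the pair of "local endpoints" $(s_i, t_i)$, where $s_1 = s$, $t_k = t$, and for an internal block $t_{i} = c_{i} = s_{i+1}$; for every block $B$ off the path $P$ I would call the TSP algorithm (equivalently TSPP with $s=t$ equal to the cut vertex attaching $B$ toward $P$). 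Concatenating the resulting $s_i$–$t_i$ paths along $P$ and splicing in each off-path tour at the point where its block meets $P$ yields an $s$–$t$ walk visiting every vertex; the cost bound again follows from additivity of the LP optimum, for which I would establish $\OPT_{LP}(G,s,t) = \sum_{i} \OPT_{LP}(B_i, s_i, t_i) + \sum_{B \notin P} \OPT_{LP}(B)$ by the same restriction/assembly argument, now using that a set $S$ with $|\{s,t\}\cap S| = 1$ restricts within each block $B_i$ to a set separating $s_i$ from $t_i$ exactly along $P$, and to a "full" set (demand $2$) in every off-path block and in the on-path blocks not between the relevant endpoints.

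The main obstacle I expect is the bookkeeping in this last LP-additivity claim for the path version: one must check that the demand pattern ($2$ for $S \in \Phi$, $1$ otherwise) decomposes correctly block-by-block, i.e. that a subset $S$ of $V$ which "splits" the pair $\{s,t\}$ splits exactly one block's local pair and leaves all other blocks with the stronger demand $2$ — this relies on the tree structure of the block decomposition and the fact that any $S$ induces a connected-ish partition of the block-cut tree. A secondary nuisance is bridges: a bridge contributes LP value $1$ but forces cost $2$ in any tour, so either the base case must absorb a factor-$2$ loss on bridges (harmless since $r \ge 4/3 > ?$ — actually one must check $2 \le r \cdot 1$, which fails, so) — the cleaner fix, which I would adopt, is to observe that a connected graph with a bridge is never $2$-vertex-connected and handle bridges by the standard trick of noting the optimal LP solution already puts weight forcing the issue, or simply to restrict attention (as M\"omke–Svensson do) to the reduction producing $2$-vertex-connected instances via a separate preprocessing that eliminates bridges by contraction; I will follow whichever convention the cited Lemma~2.1 of~\cite{momke} uses so that the constant is not degraded.
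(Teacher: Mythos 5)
The paper itself does not prove this lemma; it imports it verbatim from M\"omke and Svensson, and their proof is precisely the route you outline: decompose along cut vertices into blocks, use additivity of the Held--Karp optimum over blocks, run the assumed algorithm on each block, and glue the Eulerian subgraphs (resp.\ Eulerian $s_i$--$t_i$ paths along the block-cut-tree path, plus tours on off-path blocks) at the shared cut vertices. So the approach is the intended one, and the gluing, degree-parity and polynomial-time parts of your argument are fine.

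Two steps need repair, however. First, your justification of the restriction step is wrong as stated: for $S \subsetneq V(B)$ it is \emph{not} true that $\delta_B(S)$ equals $\delta_G(S)$ or $\delta_G(V(B)\setminus S)$, because a block may contain several cut vertices of $G$, each with its own components hanging off it, so both $S$ and its complement in $B$ can have extraneous incident edges outside $B$. The correct argument enlarges $S$ to $S'$ by adding, for every cut vertex $c \in S$ lying in $B$, all components of $G-c$ disjoint from $B$; then every edge outside $B$ lies entirely inside or entirely outside $S'$, so $\delta_G(S') = \delta_B(S)$ and the restriction of $x^*$ to $B$ is feasible, giving $\sum_B \OPT_{LP}(B) \le \OPT_{LP}(G)$, which is the only direction you need. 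The same extension also settles the TSPP bookkeeping you flag: tracking on which side $s$ and $t$ land shows that for an on-path block $S'$ separates $s$ from $t$ exactly when $S$ separates the local pair $s_i,t_i$, while for an off-path block both $s$ and $t$ hang off the single cut vertex attaching $B$ towards the path, so they are never separated and the demand $2$ applies, as required. Second, the bridge discussion at the end is left unresolved ("follow whichever convention"), but in fact there is nothing to resolve: in $LP(G)$ a bridge $e$ is the unique edge of the cut determined by either side of it, so every feasible solution has $x_e \ge 2$; hence a bridge block has $\OPT_{LP} = 2$ and doubling it costs $2 \le r\cdot 2$, with no factor-$2$ loss and no need for contraction. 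In the TSPP relaxation a bridge lying on the $s$--$t$ path of the block-cut tree has local LP value $1$, but your concatenated solution traverses it only once, so the bound is again preserved.
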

We will henceforth assume that the graphs we work with are all $2$-vertex-connected. Let $G$ be such graph. We now construct a certain auxiliary flow network corresponding to $G$.

Let $T$ be a DFS spanning tree of $G$ with an arbitrary starting vertex $r$. Direct all edges of $T$ (called \emph{tree-edges}) away from the root, and all other edges (called \emph{back-edges})
towards the root. Let $\vec{G}$ be the resulting directed graph, and let $\vec{T}$ be its subgraph corresponding to $T$. Where neccessary to avoid confusion, we will use the name \emph{arcs} (and \emph{tree-arcs} and \emph{back-arcs}) for the edges of this directed graph. The flow network is obtained from $\vec{G}$ by replacing some of its vertices with gadgets.

Let $v$ be any non-root vertex of $\vec{G}$ having $l$ children: $w_1,\ldots,w_l$ in $T$. We introduce $l$ new vertices $v_1,\ldots,v_l$ and replace the tree-arc $(v,w_j)$ by tree-arcs $(v,v_j)$ and $(v_j,w_j)$ for $j=1,\ldots,l$. We also redirect to $v_j$ all the back-arcs leaving the subtree rooted by $w_j$ and entering $v$. We will call the new vertices and the root \emph{in-vertices} and the remaining vertices \emph{out-vertices}. We will also denote the set of all in-vertices by $\mathcal{I}$. Notice that all the back-arcs go from out-vertices to in-vertices, and that each in-vertex has exactly one outgoing edge.

We assign lower bounds (demands) and upper bounds (capacities) as well as costs to arcs. The demands of the tree-arcs are $1$ and the demands of the back-arcs are $0$. The capacities of all arcs are $\infty$. Finally the cost of any circulation $f$ is defined to be $\sum_{v \in \mathcal{I}} \max(f(B(v))-1,0)$, where $B(v)$ is the set of incoming arcs of $v$. This basically means that the cost is $0$ for tree-arcs and $1$ for back-arcs, except that for every in-vertex the first unit of circulation is free. The circulation network described above will be denoted $C(G,T)$. For any circulation $C$, we will use $|C|$ to denote its cost as described above.

It is worth noting that the cost function of $C(G,T)$ can be simulated using the usual fixed-cost edges by introducing an extra vertex $v'$ for each in-vertex $v$, redirecting all in-arcs of $v$ to $v'$ and putting two arcs from $v'$ to $v$: one with capacity of $1$ and cost $0$, and the other with capacity $\infty$ and cost $1$. For simplicity of presentation however, we will use the simpler network with a slighly unusual cost function.

Also note that the edges of $C(G,T)$ minus the incoming tree edges of the in-vertices are in 1-to-1 correspondance with the edges of $G$. Similarly, all vertices of $C(G,T)$ except for the new vertices correspond to the vertices of the original graph. We will often use the same symbol to denote both edges or both vertices. 

The main technical tool of~\cite{momke} is given by the following theorem:
\begin{theorem}[Lemma 4.1 of~\cite{momke}]
\label{thm:tour}
Let $G$ be a 2-vertex connected graph, let $T$ be a DFS tree of $G$, and let $C^*$ be a circulation in $C(G,T)$ of cost $|C^*|$. Then there exists a spanning Eulerian multigraph $H$ in $G$ with at most $\frac{4}{3}n + \frac{2}{3} |C^*| - \frac{2}{3}$ edges. In particular, this means that there exists a TSP tour in the shortest path metric of $G$ with the same cost.
\end{theorem}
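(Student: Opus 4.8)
The plan is to turn the circulation $C^*$ into an explicit spanning connected Eulerian multigraph $H$ of $G$, built around the DFS tree $T$, using the flow as a recipe for which tree-edges to double, which tree-edges to delete, and which back-edges to add. First I would replace $C^*$ by an \emph{integral} circulation of cost at most $|C^*|$: the network $C(G,T)$ has integral demands and infinite capacities, and its cost, though slightly unusual, is linear with a single integral breakpoint per in-vertex, so an integral optimum of no larger cost exists and decomposes into a multiset of directed cycles. Read off in $\vec G$, each such cycle descends along tree-arcs and climbs along back-arcs, so it is an ``ear'': a back-edge $b=\{x,y\}$ with $y$ an ancestor of $x$, together with the tree path from $y$ down to $x$. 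Since every tree-arc carries demand $1$, every tree-edge is crossed by at least one ear, and the total amount by which crossings exceed the ``one free unit per in-vertex'' allowance built into the cost is exactly $|C^*|$. Two-vertex-connectivity of $G$ is what makes the circulation feasible in the first place, and it is also what guarantees that every tree-edge is spanned by a back-edge.

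From the cycle decomposition I would extract a \emph{removable pairing} in the sense sketched in the introduction: a multiset $\tilde E$ of back-edges together with a pairing of some of them with tree-edges such that (i) deleting from $T\cup\tilde E$ any set of tree-edges containing at most one edge of each pair (each deletion accompanied by its partner back-edge) leaves a connected spanning graph, and (ii) the number of unpaired back-edges is at most $|C^*|$. This is combinatorial bookkeeping on the ear structure: along a DFS tree the ears nest or are internally disjoint, so each tree-edge of an ear can be ``backed up'' by that ear's back-edge, and the excess crossings counted by $|C^*|$ are precisely the ones that cannot be paired.

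I would then assemble $H$ by mixing three moves on $T$ — doubling a tree-edge, adding a back-edge, and removing a tree-edge (only when its pairing partner is simultaneously added, so connectivity survives) — to reach an all-even, connected, small multigraph. The ``free'' atomic move is: add a back-edge $b$ and delete one copy of each tree-edge on its path; this leaves every vertex parity unchanged (interior path vertices lose $2$, each endpoint loses $1$ from the path and gains $1$ from $b$) and, by the coverage guarantee, keeps the graph connected. Starting from $2T$ and applying such moves already gives a connected Eulerian multigraph, but only with a bound of order $\tfrac32 n$, because ears can be as short as length two. The point of the theorem is that one can do better by \emph{not} doubling so freely: keep most tree-edges single, use the added back-edges plus a controlled number of extra doublings/removals to repair the residual parity defects (which I would formalize as a minimum $T$-join bounded via the pairing), and sequence the moves in an order compatible with the DFS tree so that global connectivity — not merely coverage of each individual tree-cut — is maintained throughout.

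The hard part is exactly this last step: pushing the leading constant from $\tfrac32$ down to $\tfrac43$ while keeping connectivity and a \emph{tight} edge count. Parity in isolation is easy, and re-covering each elementary tree-cut is easy, but connectivity of the whole multigraph after many removals, together with an exact count, forces a delicate per-ear case analysis — deciding how many edges of each ear become single, doubled or deleted, showing that the worst local pattern costs four edges where naive doubling costs six (this is where the $\tfrac43$ comes from), and checking that all local choices are globally consistent for both connectivity and parity. I would organize the count as an amortized charge over the in-vertices of $C(G,T)$ — essentially a knapsack-type argument — from which $\tfrac23\cdot 2(n-1)=\tfrac43 n-\tfrac43\le\tfrac43 n-\tfrac23$ emerges as the base term and the unpaired back-edges contribute the extra $\tfrac23|C^*|$. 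The closing sentence of the theorem is then immediate: a spanning Eulerian multigraph using only edges of $G$ has an Euler tour, and short-cutting repeated vertices along it yields a closed walk through all of $V$ of the same total length, i.e.\ a graphic-TSP tour of cost at most $|H|$ in the shortest-path metric of $G$.
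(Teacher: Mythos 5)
This theorem is not proved in the paper at all: it is imported verbatim as Lemma 4.1 of~\cite{momke}, so the only meaningful comparison is with the original proof of M\"omke and Svensson. Your scaffolding agrees with theirs up to a point — pass to an integral circulation, read off the back-edges it uses, and extract a removable pairing in which the unpaired back-edges number at most $|C^*|$ (one pairing per in-vertex being ``free''), then build $H$ by adding back-edges and deleting/doubling tree-edges, and shortcut the Euler tour at the end. (A small inaccuracy along the way: a directed cycle in $C(G,T)$ need not be a single ``ear''; it can use several back-arcs, so you should decompose along fundamental cycles of back-arcs rather than arbitrary cycle decompositions.)

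The genuine gap is that the decisive step — producing a \emph{connected}, \emph{even} spanning multigraph whose size is $\frac{4}{3}m-\frac{2}{3}|R|$ for the relevant support graph, which is exactly what yields $\frac{4}{3}n+\frac{2}{3}|C^*|-\frac{2}{3}$ — is only announced as ``the hard part'' and deferred to an unspecified per-ear case analysis and an ``amortized knapsack-type charge.'' That step \emph{is} the theorem; nothing in your text establishes it, and the route you gesture at (start from $2T$, repair parity by a $T$-join, argue each ear locally costs $4$ instead of $6$) is not how the constant $\frac{4}{3}$ is actually obtained and faces exactly the difficulty you name: simultaneous control of global connectivity, parity, and an exact count. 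In~\cite{momke} this is done by a global polyhedral/matching argument, not local accounting: the spanning support graph (tree plus used back-edges, $m=n-1+|B|$ edges) is reduced to a cubic $2$-edge-connected graph by gadgets; by the perfect matching polytope (Naddef--Pulleyblank style) one samples a perfect matching in which every edge has marginal probability $\frac{1}{3}$; matched removable edges are deleted and matched non-removable edges are doubled. Connectivity holds because the two edges of any pair share their in-vertex, so a matching never deletes both; parity holds because removing a perfect matching from a cubic graph leaves a $2$-factor; and linearity of expectation gives $\frac{4}{3}m-\frac{2}{3}|R|$, which together with $|R|\ge |B|+|\mathcal{I}'|$ and $|C^*|\ge |B|-|\mathcal{I}'|$ (with $\mathcal{I}'$ the in-vertices receiving back-flow) yields the stated bound. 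Without this mechanism, or a fully worked-out substitute for it, your proposal remains a plan rather than a proof.
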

and its generalized version 
\begin{theorem}[Lemma 4.1(generalized) of~\cite{momke}]
\label{thm:path}
Let $G=(V,E)$ be a 2-vertex connected graph and $s,t$ its two vertices, and let $G'=(V,E \cup \{e'\})$ where $e'=\{s,t\}$. Let $T$ be a DFS tree of $G'$ and let $C^*$ be a circulation in $C(G',T)$ of cost $|C^*|$. Then there exists a spanning multigraph $H$ in $G$, that has an Eulerian path between $s$ and $t$ with at most $\frac{4}{3}n + \frac{2}{3} |C^*| - \frac{2}{3} + \textrm{dist}_G(s,t)$ edges. In particular, this means that there exists a TSP path between $s$ and $t$ in the shortest path metric of $G$ with the same cost.
\end{theorem}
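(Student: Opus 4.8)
The plan is to reduce Theorem~\ref{thm:path} to Theorem~\ref{thm:tour} by exploiting the relationship between $G'$ and $G$. Observe that a circulation $C^*$ in $C(G',T)$ is exactly the kind of object to which Theorem~\ref{thm:tour} applies, since $G'$ is $2$-vertex connected (adding the edge $e'=\{s,t\}$ to a $2$-vertex connected graph keeps it $2$-vertex connected) and $T$ is a DFS tree of $G'$. Applying Theorem~\ref{thm:tour} directly to $G'$, $T$ and $C^*$, I obtain a spanning Eulerian multigraph $H'$ in $G'$ with at most $\frac{4}{3}n + \frac{2}{3}|C^*| - \frac{2}{3}$ edges. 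The task is then to convert the closed Eulerian walk in $H'$ into an Eulerian $s$--$t$ walk in a multigraph $H$ that uses only edges of $G$, paying at most an extra $\textrm{dist}_G(s,t)$ in the edge count.

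The key case analysis is on how many copies of the special edge $e'$ occur in $H'$. First I would argue that without loss of generality $H'$ uses $e'$ at least once: if it does not, then $H'$ is already a spanning Eulerian multigraph in $G$, in particular it is connected and spans $V$, so I can take a shortest $s$--$t$ path $P$ in $G$, add its edges to $H'$, and fix up parities — this changes the parity of exactly the vertices that appear an odd number of times along $P$, which after adding $P$ to an all-even multigraph leaves only $s$ and $t$ odd, yielding an Eulerian $s$--$t$ path; the edge count grows by $|P| = \textrm{dist}_G(s,t)$, which is exactly the slack allowed. If $H'$ uses $e'$ exactly once, then deleting that single copy of $e'$ from $H'$ gives a multigraph on $E$ in which precisely $s$ and $t$ have odd degree and everything is still connected, so it already has an Eulerian $s$--$t$ path, and the edge count has only gone \emph{down}; so this case is immediate. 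The interesting case is when $e'$ appears $k \ge 2$ times: here I would delete all $k$ copies, splitting the Eulerian circuit into $k$ walks (segments between consecutive traversals of $e'$), reassemble them into a single $s$--$t$ walk plus possibly some closed walks attached at $s$ or $t$, and then restore connectivity/parity using at most one shortest $s$--$t$ path's worth of edges. The bookkeeping is that removing $k$ copies of $e'$ saves $k$ edges while adding back a path costs $\textrm{dist}_G(s,t) \le k$ (since $k$ copies of the unit-length edge $\{s,t\}$ certainly connect $s$ and $t$), so the net change is nonpositive and the claimed bound $\frac{4}{3}n + \frac{2}{3}|C^*| - \frac{2}{3} + \textrm{dist}_G(s,t)$ holds with room to spare.

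The main obstacle is the connectivity/parity management in the multi-copy case: after stripping out all copies of $e'$, the remaining multigraph on $E$ may become disconnected (each removed copy of $e'$ could have been a bridge in the multigraph $H'$), and the degrees of $s$ and $t$ change parity by $k$ each while all other degrees are untouched. I would handle this by contracting: think of the components touched by the removed $e'$-copies; since $H'$ was connected, these components together with the $k$ parallel $e'$-edges form a connected structure, so in $G$ (which is connected and in fact $2$-vertex connected) I can route between them using genuine $G$-edges, and a single shortest $s$--$t$ path in $G$ suffices because any such path must leave and re-enter each relevant region, stitching everything back together; simultaneously, adding one $s$--$t$ path corrects the parity at $s$ and $t$ when $k$ is odd, and when $k$ is even I note that the remaining multigraph already has all-even degree except I still need $s,t$ odd, which the added path supplies — the parity of $s$ and $t$ after adding a path of length $\textrm{dist}_G(s,t)$ is flipped exactly once each, so in every subcase I can arrange that exactly $s$ and $t$ end up odd. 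Finally I would invoke the standard fact that a connected multigraph with exactly two odd-degree vertices $s,t$ has an Eulerian path between them, and that such a path in the multigraph $H \subseteq G$ corresponds to a walk in $G$ visiting every vertex, hence to a TSP $s$--$t$ path in the shortest-path metric of the same length. A cleaner alternative, which I would actually write up, is to cite the construction of Theorem~\ref{thm:tour} in~\cite{momke} and simply track the one extra edge $e'$ through that argument, since the generalized Lemma~4.1 in~\cite{momke} is proved precisely this way; the self-contained sketch above is the fallback if a reader wants the details.
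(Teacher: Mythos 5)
The paper does not actually prove this statement: it is imported from M\"omke and Svensson, and the remark following it only explains that the variant with a circulation in $C(G',T)$ is justified inside the proof of Theorem~1.2 of~\cite{momke}. So your plan --- apply Theorem~\ref{thm:tour} to $G'$, $T$, $C^*$ to get a spanning Eulerian multigraph $H'$ in $G'$ with at most $\frac{4}{3}n+\frac{2}{3}|C^*|-\frac{2}{3}$ edges, and then splice the copies of $e'$ out of $H'$ at an extra cost of at most $\textrm{dist}_G(s,t)$ --- is a genuinely different, self-contained route, and it can be made to work. But as written the case $k\ge 2$ (where $k$ is the number of copies of $e'$ in $H'$) has a real gap. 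After deleting all $k$ copies, the degrees of $s$ and $t$ have parity $k \bmod 2$ and all other degrees are even. Your prescription is to always add one shortest $s$--$t$ path to restore connectivity, and you assert that this ``corrects the parity at $s$ and $t$ when $k$ is odd''; it is exactly the other way around: for odd $k$ the vertices $s,t$ are already odd, and adding the path would make them even, leaving you with a closed Eulerian tour instead of an Eulerian $s$--$t$ trail. Moreover, your bookkeeping ``$\textrm{dist}_G(s,t)\le k$ since $k$ copies of the unit-length edge $\{s,t\}$ connect $s$ and $t$'' is unfounded ($e'$ is not an edge of $G$, and $\textrm{dist}_G(s,t)$ can be arbitrarily large compared to $k$); luckily it is also unnecessary, because the target bound already carries the additive term $\textrm{dist}_G(s,t)$.

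The missing idea that repairs the argument is a cut-parity observation. If removing the $k$ parallel copies of $e'$ disconnects $H'$, then every vertex lies in the component of $s$ or of $t$ (truncate a connecting walk at its first visit to $s$ or $t$), so there are at most two components, and the copies of $e'$ are precisely the edges of $H'$ crossing the cut around the $s$-component; since all degrees in $H'$ are even, the number of crossing edges, i.e.\ $k$, must be even. Hence: for $k$ odd the remainder is automatically connected with exactly $s,t$ odd, and you must \emph{not} add any path (you just save $k$ edges); for $k$ even, including $k=0$, the remainder is all-even with at most two components containing $s$ and $t$ respectively, and adding one shortest $s$--$t$ path (a simple path, so only $s$ and $t$ change parity) restores connectivity and produces exactly two odd vertices, at cost $\textrm{dist}_G(s,t)$, which is within budget since $k$ edges were saved. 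The same parity argument also supplies the justification you skipped in the $k=1$ case (an edge of a connected even multigraph is never a bridge). With this case split your reduction is complete and, unlike the paper, does not require re-opening the construction of~\cite{momke}; your ``cleaner alternative'' of tracking $e'$ through that construction is essentially what the paper's remark appeals to.
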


\begin{remark}
The above theorem is not just a rewording of the generalized version of Lemma 4.1 from~\cite{momke}. In our version $C^*$ is a circulation in $C(G',T)$ and not $C(G,T)$. Note however, that in the proof of Theorem 1.2 of~\cite{momke} the authors are in fact using the version above, and provide arguments for why it is correct.
\end{remark}

In order to be able to apply Theorem~\ref{thm:tour} and Theorem~\ref{thm:path}, the authors of~\cite{momke} use the optimal solution of $LP(G)$ to define a circulation $f$ in $C(G,T)$ as follows. Let $G=(V,E)$ be a graph, and let $E' = \{ e \in E: x_e^* > 0\}$, where $x^*$ is an extreme optimal solution of $LP(G)$. Let $G'=(V,E')$. It is clear that $x^*$ is also an optimal solution for $LP(G')$, so an $r$-approximate TSP tour with respect to $\OPT_{LP}(G')$ is also $r$-approximate with respect to $\OPT_{LP}(G)$. Therefore, we can always assume that $E'=E$. The reason why this assumption is useful is given by the following theorem.
\begin{theorem}[Cornuejols, Fonlupt, Naddef~\cite{cornuejols}]
\label{thm:cornuejols}
For any graph $G$, the support of any extreme optimal solution to $LP(G)$ has size at most $2n-1$.
\end{theorem}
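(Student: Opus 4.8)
The plan is to prove the slightly stronger statement that \emph{every} vertex of the subtour polytope $P=\{x\in\mathbb{R}^{E}: x\ge 0,\ x(\delta(S))\ge 2\text{ for all }\emptyset\neq S\subsetneq V\}$ has at most $2n-1$ nonzero coordinates; optimality of $x^*$ never enters, only the fact that an optimal solution may be taken to be a vertex of $P$ (which holds since $P$ is pointed, so the optimal face of $P$ has a vertex, and a vertex of a face is a vertex of the polytope). So let $x^*$ be a vertex of $P$, let $E^*=\{e: x^*_e>0\}$ be its support and $G^*=(V,E^*)$; it suffices to show $|E^*|\le 2n-1$. Restricting to the face $\{x_e=0:e\notin E^*\}$ identifies $x^*$ with a vertex of the subtour polytope of $G^*$, at which, since $x^*_e>0$ for all $e\in E^*$, no nonnegativity constraint is tight. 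Hence $x^*$ is the unique solution of a system of $|E^*|$ linearly independent equalities $x(\delta_{G^*}(S))=2$. Writing $\mathcal{T}=\{S:\emptyset\neq S\subsetneq V,\ x^*(\delta_{G^*}(S))=2\}$ for the family of all tight sets, this says $\mathrm{rank}\{\chi_{\delta(S)}:S\in\mathcal{T}\}=|E^*|$, where $\chi_{\delta(S)}\in\mathbb{R}^{E^*}$ is the incidence vector of the cut $\delta_{G^*}(S)$.

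The heart of the argument is an uncrossing step. The cut function $f(S)=x^*(\delta_{G^*}(S))$ is submodular, $f(A)+f(B)\ge f(A\cap B)+f(A\cup B)$, with equality precisely when no edge of $E^*$ joins $A\setminus B$ to $B\setminus A$ --- this is where $x^*>0$ on $E^*$ is essential, converting ``zero $x^*$-weight'' into ``no edge''. Thus if $A,B\in\mathcal{T}$ \emph{cross} (all of $A\cap B,\ A\setminus B,\ B\setminus A$ nonempty and $A\cup B\neq V$), then $f(A\cap B)\ge 2$ and $f(A\cup B)\ge 2$ force $f(A\cap B)=f(A\cup B)=2$ together with the absence of $A\setminus B$-to-$B\setminus A$ edges, and a short case check on the location of the endpoints of an edge yields the vector identity $\chi_{\delta(A)}+\chi_{\delta(B)}=\chi_{\delta(A\cap B)}+\chi_{\delta(A\cup B)}$. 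From this one concludes, in the usual way, that a \emph{maximal} laminar subfamily $\mathcal{L}\subseteq\mathcal{T}$ satisfies $\mathrm{span}\{\chi_{\delta(S)}:S\in\mathcal{L}\}=\mathrm{span}\{\chi_{\delta(S)}:S\in\mathcal{T}\}$: if some tight $S$ had $\chi_{\delta(S)}$ outside the span of $\mathcal{L}$, choose such an $S$ crossing as few members of $\mathcal{L}$ as possible; if $S$ crosses nothing in $\mathcal{L}$ then $\mathcal{L}\cup\{S\}$ is laminar, contradicting maximality, and otherwise uncrossing $S$ with a member it crosses produces a tight set still outside the span but crossing strictly fewer members of $\mathcal{L}$, a contradiction. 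Consequently $\mathcal{T}$ contains a laminar subfamily $\mathcal{L}^*$ with $|\mathcal{L}^*|=|E^*|$ whose cut vectors are linearly independent.

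It remains to bound the size of a laminar family of nonempty subsets of an $n$-element set. Represent $\mathcal{L}^*$ as a forest ordered by inclusion and adjoin to it every singleton $\{v\}$, $v\in V$, not already present; this only enlarges the family and keeps it laminar. In the resulting forest the leaves are exactly the $n$ singletons, and every internal node $B$ has at least two children, since each $v\in B$ forces $\{v\}$ to be a proper descendant of $B$, so the children of $B$ cannot consist of a single proper subset of $B$. Hence the number of internal nodes is at most (number of leaves)$\,-1\le n-1$, and the total number of sets is at most $2n-1$; since $\mathcal{L}^*$ lies inside this augmented family, $|E^*|=|\mathcal{L}^*|\le 2n-1$. (Noting in addition that $V\notin\mathcal{T}$, since $\delta(V)=\emptyset$, forces the augmented forest to have at least two roots and improves this to $2n-2$, but $2n-1$ is all that is needed.)

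I expect the main obstacle to be the uncrossing argument: deriving the characteristic-vector identity $\chi_{\delta(A)}+\chi_{\delta(B)}=\chi_{\delta(A\cap B)}+\chi_{\delta(A\cup B)}$ from the equality case of submodularity (the one place where $x^*>0$ is used in an essential way), and running the ``uncrossing reduces the number of crossings'' induction to show that a maximal laminar subfamily already spans the tight-cut space. The linear-algebraic set-up and the final counting of laminar families are routine.
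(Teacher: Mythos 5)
The paper itself does not prove this theorem --- it is imported from Cornuejols, Fonlupt and Naddef~\cite{cornuejols} as a black box --- so there is no internal proof to compare against; I am judging your argument on its own. Your route is the standard one (pass to a vertex of the support-restricted subtour polytope, note that the tight cut constraints alone must have rank $|E^*|$ since no nonnegativity constraint is tight, uncross to a laminar family, count), and the linear-algebraic setup, the equality case of submodularity for genuinely crossing pairs, and the final $2n-1$ count of a laminar family are all correct.

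There is, however, a genuine gap in the uncrossing induction, precisely where you anticipated trouble. You define $A,B$ to cross only when $A\cap B$, $A\setminus B$, $B\setminus A$ are all nonempty \emph{and} $A\cup B\neq V$ --- and you need that last condition, because when $A\cup B=V$ one has $f(A\cup B)=x^*(\delta(V))=0$ and there is no LP constraint forcing it to be at least $2$, so the submodularity argument collapses. But then the step ``if $S$ crosses nothing in $\mathcal{L}$ then $\mathcal{L}\cup\{S\}$ is laminar'' is false: a tight $S$ and a member $L\in\mathcal{L}$ can have $S\cap L$, $S\setminus L$, $L\setminus S$ all nonempty while $S\cup L=V$; such a pair does not cross in your sense, yet it is not laminar, so maximality of $\mathcal{L}$ gives no contradiction and the induction stalls. (Conversely, widening ``cross'' to mean ``not laminar'' breaks the uncrossing identity in exactly this case, for the reason above.) The standard repair uses the symmetry $\delta(S)=\delta(V\setminus S)$: the complement of a tight set is tight and has the same cut vector, so fix a root $r\in V$ and replace every tight set containing $r$ by its complement before uncrossing. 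This leaves the span of the tight cut vectors unchanged; all sets now avoid $r$, so unions are never $V$ and ``non-crossing'' coincides with ``laminar''; and intersections and unions of $r$-avoiding sets again avoid $r$, so your uncrossing step and the usual ``strictly fewer crossings'' argument go through verbatim. As a bonus, the resulting laminar family consists of nonempty subsets of $V\setminus\{r\}$, which even yields the sharper bound $2n-3$; with this fix in place the rest of your write-up stands.
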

Thus, we can assume that $|E| \le 2n-1$. Moreover, we can assume that $G$ is 2-vertex connected because of Lemma~\ref{lem:two-connect}. 

We construct a circulation $f$ in $C(G,T)$ as a sum of two ciculations: $f'$ and $f''$. Let $x^*$ be, as before, an extreme optimal solution of $LP(G)$. Also, let $T$ used in the construction of $C(G,T)$ be the tree resulting from always following the edge $e$ with the highest value of $x_e^*$. The ciculation $f'$ corresponds to sending, for each back-arc $a$, flow of size $\min(x_a^*,1)$ along the unique cycle formed by $a$ and some tree-arcs. The circulation $f''$ is defined in a way that guarantees that $f=f'+f''$ satisfies all the lower bounds. Let $v$ be an out-vertex and $w$ an in-vertex, such that there is an arc $(v,w)$ in $C(G,T)$, and the flow on $(v,w)$ is smaller than $1$. Also let $a$ be any back-arc going from a descendant of $w$ to an ancestor of $v$ (in $\vec{T}$). Such arc always exists since $G$ is $2$-vertex connected. We push flow along all edges of the unique cycle formed by $a$ and tree-arcs until the flow on $(v,w)$ reaches $1$. 

The total cost of $f$ can be bounded by
\[ \sum_{v \in \mathcal{I}} \max(f(B(v))-1,0) \le \sum_{v \in \mathcal{I}} \max(f'(B(v))-1,0) + \sum_{v \in \mathcal{I}} f''(B(v)).\]
We will denote the sum $\sum_{v \in \mathcal{I}} f''(B(v))$ by $|f''|$, which is slightly inconsistent with previous definitions, but simplifies the notation quite a bit. We thus have $|f| \le |f'| + |f''|$.

The authors of~\cite{momke} provide the following bounds for the two terms of the above expression:
\begin{lemma}[Claim 5.3 in~\cite{momke}]
\label{lem:f2}
$ |f''| \le \OPT_{LP}(G)-n.$
\end{lemma}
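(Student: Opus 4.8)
The idea is to bound the total amount of flow that $f''$ needs to add, arc by arc, and then relate the resulting sum to the slack in the tree-arc constraints of the fractional solution $x^*$. Recall that $f''$ only has to "fix up" the incoming tree-arcs $(v,w)$ of in-vertices $w$ on which the base circulation $f'$ carries flow strictly less than $1$; for each such arc we push exactly $1 - f'(B(w))$ extra units of flow along a cycle through a suitably chosen back-arc, so that $f'(B(w)) + f''(B(w)) = 1$ afterwards. Hence $|f''| = \sum_{v \in \mathcal{I}} f''(B(v)) = \sum_{v\in\mathcal I}\max(1 - f'(B(v)),0)$. The plan is: (i) express $f'(B(w))$ in terms of $x^*$-values; (ii) show that the tree-arc into an in-vertex $w$ corresponds, via the 1-to-1 correspondence between $C(G,T)$-arcs and $G$-edges, to a tree-edge $e_w$ of $T$, and that $f'(B(w))$ equals the total $x^*$-flow crossing the fundamental cut of $e_w$ minus $1$ (i.e.\ it picks up exactly the "excess" of $x^*(\delta(S_w))$ over $2$, where $S_w$ is the subtree below $e_w$); (iii) conclude $\max(1-f'(B(w)),0) \le 1 - f'(B(w))$ and sum.

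Concretely, for a tree-edge $e = (v,w)$ of $T$ with $w$ the child, let $S$ be the vertex set of the subtree rooted at $w$. Every back-edge of $G$ incident to $S$ is either a back-arc from a descendant of $w$ up past $v$ — these contribute $\min(x_a^*,1)$ to $f'(B(w))$ — or it is "internal", not crossing $\delta(S)$. Since $x^*$ comes from an extreme optimal solution and we have reduced to the support graph, one expects $x_a^* \le 1$ for the relevant back-edges (this should follow from 2-vertex-connectivity together with the DFS/greedy choice of $T$ and is presumably established in the M\"omke--Svensson setup), so $f'(B(w)) = x^*(\delta(S)) - x^*_e = x^*(\delta(S)) - 1$ using that the unique tree-edge in $\delta(S)$ is $e$ and that $x^*_e = 1$ for tree-edges chosen greedily. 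Therefore $1 - f'(B(w)) = 2 - x^*(\delta(S)) + (x^*_e - 1) = 2 - x^*(\delta(S))$ — wait, more carefully $1 - f'(B(w)) = 1 - (x^*(\delta(S)) - 1) = 2 - x^*(\delta(S))$, which is nonpositive whenever the LP constraint $x^*(\delta(S)) \ge 2$ holds with equality or slack, and genuinely positive only when... so in fact the sum telescopes against $\sum_S (x^*(\delta(S)) - 2) \le 0$-type identities. The honest accounting is: $|f''| \le \sum_{e\in T}\big(1 - f'(B(w_e))\big)^+$, and $\sum_{e \in T} f'(B(w_e)) = \sum_{a \text{ back-edge}} (\text{length of its tree-cycle}) \cdot x^*_a$-ish; matching this against $\sum_e x^*_e = \OPT_{LP}(G)$ and $|T| = n-1$ tree-edges gives the bound $\OPT_{LP}(G) - n$ after grouping the $+1$-per-in-vertex free units.

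The main obstacle I expect is step (ii): pinning down exactly which back-arcs of $C(G,T)$ feed into a given in-vertex $w$ and showing their total $f'$-contribution equals $x^*(\delta(S_w)) - 1$. This requires unwinding the gadget construction (the splitting of $v$ into $v_1,\dots,v_l$ and the redirection of back-arcs into the $v_j$'s), checking that each back-edge of $G$ crossing $\delta(S_w)$ is routed through $w$ exactly once in $f'$, and handling the boundary subtlety that in-vertices are the new split vertices rather than original vertices. A secondary point to nail down is the claim $x_a^* \le 1$ for the back-edges used, so that $\min(x_a^*,1) = x_a^*$; if that fails one only gets an inequality, but that inequality goes the right way ($f'(B(w))$ can only be smaller, making $1 - f'(B(w))$ larger — so one must instead argue the relevant back-edges genuinely have $x^*$-value at most $1$, which is where 2-vertex-connectivity and the extreme-point/support reduction are used). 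Once these structural facts are in place, the remaining computation — summing $\max(1 - f'(B(w)),0) \le \sum_w 1 - \sum_w f'(B(w))$ over the $n-1$ tree-edges and identifying $\sum_w f'(B(w))$ with $\OPT_{LP}(G) - (n-1) - (\text{correction})$ — is routine bookkeeping.
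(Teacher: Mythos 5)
There are genuine gaps here, and the two central steps of your sketch do not survive scrutiny. (Note also that the paper itself does not prove this lemma --- it is imported verbatim as Claim 5.3 of~\cite{momke}; the closest in-paper model of a correct argument is the proof of Lemma~\ref{lem:f2-bound-new-local}, which proves a strengthening.) First, your computation hinges on ``$x^*_e=1$ for tree-edges chosen greedily''. The greedy DFS only guarantees that the tree arc $t_w$ carries the \emph{largest} $x^*$-value among the edges available at that moment (this is exactly what Fact~\ref{fact:edge-bound} uses); tree edges are typically fractional, and the tight configurations of Theorem~\ref{thm:tight} have $x_i=\frac{2}{3}$ or $\frac{1}{2}$. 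Without $x^*_e=1$ your identity $f'(B(w))=x^*(\delta(S))-1$ evaporates. Second, you are repairing the wrong quantity. The demand $f''$ must fix sits on the tree arc from the out-vertex $v$ into the in-vertex $w$, and $f'$ places on that arc only the (truncated) weight of back edges going from $T_w$ to \emph{strict} ancestors of $v$ --- the quantity called $\varepsilon_w$ in Section~\ref{sec:bound2}. Back edges from $T_w$ to $v$ itself are redirected straight into $w$, so their cycles never traverse that arc: they contribute to $f'(B(w))$ but not to the arc that is short of flow. Hence the amount to be pushed for $w$ is $\max(0,1-\varepsilon_w)$, not $\max(0,1-f'(B(w)))$, and your chain ``$1-f'(B(w))=2-x^*(\delta(S))\le 0$'' would prove $|f''|=0$ identically, which is false (and would make Section~\ref{sec:bound2} pointless); this should have been read as a symptom of the mis-identification rather than deferred to an unspecified telescoping.

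The remaining accounting is also not ``routine bookkeeping''. From the Held--Karp cut for $S=T_w$ one only gets $\max(0,1-\varepsilon_w)\le\max(0,z_w-1)$, and summing this over in-vertices does not yield $\OPT_{LP}(G)-n$, because in-vertices with $z_w<1$ give back less than one unit each; the correct route (mirroring the proof of Lemma~\ref{lem:f2-bound-new-local}) groups the in-vertices $\mathcal{I}_v$ of each non-root vertex $v$, uses the cut around the union of their subtrees together with $v$ (observation 2 there) plus a short case analysis to show $\sum_{w\in\mathcal{I}_v}\max(0,1-\varepsilon_w)\le\frac{1}{2}\left(x^*(v)-2\right)$, and then sums the degree slack, $\sum_{v\ne r}\left(x^*(v)-2\right)\le 2\OPT_{LP}(G)-2n$, to conclude. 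Finally, the $\min(x^*_a,1)$ truncation that worries you is not resolved by proving $x^*_a\le 1$ (which need not hold for extreme points of this LP); it is harmless because the arc only has to reach flow $1$, so the elementary inequality $\sum_a\min(x^*_a,1)\ge\min\left(\sum_a x^*_a,\,1\right)$ suffices. None of these ingredients --- the role of $\varepsilon_w$, the per-vertex charging against $x^*(v)-2$, the treatment of vertices with small $z_w$, the truncation fix --- appears in your plan, so as written it does not constitute a proof.
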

\begin{lemma}[Claim 5.4 in~\cite{momke}]
\label{lem:f1-momke}
$ |f'| \le (7-6\sqrt{2})n + 4(\sqrt{2}-1)\OPT_{LP}(G).$
\end{lemma}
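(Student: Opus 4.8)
The plan is to reduce the cost of $f'$ to a weighted count of "overloaded" subtrees and then bound that count by a knapsack-style argument driven by the subtour inequalities and by the rule used to build $T$. For a non-root vertex $w$ let $S_w$ be the subtree of $T$ rooted at $w$, let $e_w$ be the tree-edge entering $w$, and let $B_w$ be the set of back-edges of $\vec{G}$ leaving $S_w$. The first step is to check, using flow conservation in $C(G,T)$ and the fact that every in-vertex has a single outgoing arc, that the flow $f'(B(v_j))$ entering the in-vertex associated with $w=w_j$ equals the flow $f'$ routes on the tree-arc into $w$, which by the definition of $f'$ is exactly $g_w := \sum_{a \in B_w}\min(x^*_a,1)$. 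Hence $|f'| = \sum_{w \ne r}\max(g_w-1,0)$, plus the single term coming from the root (which I would treat analogously or absorb into slack). So it suffices to bound $\sum_w (g_w-1)^+$, the sum running over the \emph{overloaded} vertices, i.e.\ those with $g_w > 1$.

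Next I would extract the local cost of an overloaded $w$. The subtour constraint for $S_w$ reads $x^*_{e_w} + x^*(B_w) \ge 2$. The construction of $T$ (DFS always following the edge of largest $x^*$) yields the structural fact that $x^*_{e_w} \ge x^*_a$ for every edge $a$ joining $\mathrm{parent}(w)$ to a vertex of $S_w$; iterating this down the tree forces a large $g_w$ to be spread over several back-edges rather than carried by one. Quantitatively, realizing $g_w = 1+\tau$ requires $x^*(B_w) \ge 1+\tau$ and at least two back-edges in $B_w$ (each truncated term is at most $1$); together with the subtour inequality this bounds from below, in terms of $\tau$, the combined LP mass $x^*_{e_w} + x^*(B_w)$ attributable to $w$, after disposing of the case that some back-edge carries value exceeding $1$ (which by truncation is never better for the adversary). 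This exhibits each overloaded $w$ as a knapsack item whose profit $\tau$ is paid for by a controlled amount of LP mass and by a controlled number of support edges.

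The crux is the globalization. I would collect the overloaded vertices into a single instance whose total profit is $|f'|$ and whose resources are $\sum_e x^*_e = \OPT_{LP}(G)$ together with $|E| \le 2n-1$ (Theorem~\ref{thm:cornuejols}). The delicate point is double counting: a back-edge of tree-length $\ell$ lies in $\ell$ distinct sets $B_w$, and overloaded subtrees may be nested, so the same LP mass can appear to pay for several items. For incomparable (hence disjoint) subtrees the sets $B_w$ are automatically disjoint, so the only real difficulty is a chain of nested overloaded subtrees along a root-to-leaf path; I would handle it by charging each back-edge's weight to the bottom-most overloaded subtree containing it and tracking how the residual $g$-value propagates upward along the chain. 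Once the honest per-item resource cost is in place, the worst case reduces to a fractional knapsack whose optimal profit-to-resource ratio is attained by the balanced configuration in which an overloaded subtree carries the minimum number of back-edges at exactly the value making $x^*_{e_w}$ and $x^*(B_w) \ge 2-x^*_{e_w}$ simultaneously tight; that ratio is governed by $\sqrt{2}$, and evaluating it gives $|f'| \le (7-6\sqrt{2})n + 4(\sqrt{2}-1)\OPT_{LP}(G)$.

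The main obstacle is this last step: making the charging precise so that no LP mass is double counted across a chain of nested overloaded subtrees while still attributing enough profit to each link, and then verifying that the extremal knapsack solution is the claimed balanced one rather than something exploiting long nested chains or back-edges of large value. Feeding the DFS-maximality property of $T$ correctly into the local bound is the other place that needs care; the root term and the case of back-edges of value larger than $1$ are routine.
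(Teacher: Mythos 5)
Your opening reduction is fine: the $f'$-cost at the in-vertex associated with the tree edge into $w$ is $\max(g_w-1,0)$ with $g_w=\sum_{a\in B_w}\min(x^*_a,1)$, and the subtree cuts and the DFS max-edge rule are indeed the relevant structural facts. Note, however, that the paper never proves this lemma --- it is quoted from~\cite{momke}, and the paper instead proves the stronger bound of Lemma~\ref{lem:f1} by the machinery of Section~\ref{sec:bound1}. Measured against either that machinery or the original proof, the decisive content of your argument is exactly the part you yourself flag as an ``obstacle'' and leave undone. The accounting you propose --- paying for each overloaded subtree with the LP mass $x^*_{e_w}+x^*(B_w)$ of its cut, with total resource $\OPT_{LP}(G)$ --- runs head-on into the nesting problem: a single back edge lies in $B_w$ for \emph{every} $w$ on its tree path, and it also occupies only one slot of the support, so both of your resources (LP mass and the $2n-1$ support edges) get consumed once per overloaded ancestor. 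Your fix (charge each back edge to the bottom-most overloaded subtree and ``track how the residual $g$-value propagates upward'') is precisely the missing argument, and nothing in the sketch indicates the bookkeeping closes. The proofs that actually work sidestep this by charging resources injectively from the start: each support edge is counted only at the single in-vertex $v$ it enters, where the DFS rule gives $x^*_a\le x^*_{t_v}$ (Fact~\ref{fact:edge-bound}), so with Theorem~\ref{thm:cornuejols} the total edge budget is $n$; and the LP value enters only through the degree constraints, in the aggregated form $u^*\le 2(\OPT_{LP}(G)-n)$ of Fact~\ref{fact:u-bound}, not through the subtree cut constraints you invoke.

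The last step is likewise asserted rather than proved. You never set up the knapsack instance explicitly, never rule out that the extremum exploits long nested chains, many small items, or back edges of value exceeding $1$, and never carry out the optimization that yields the specific constants $7-6\sqrt{2}$ and $4(\sqrt{2}-1)$; saying the ratio ``is governed by $\sqrt{2}$'' presupposes the answer. For contrast, the paper's own analysis makes this extremal step completely explicit: each in-vertex is abstracted into a triple $(x_i,l_i,u_i)$ using $e_i=\lceil (l_i+u_i)/x_i\rceil$ edges, a per-item inequality $\max(0,l_i+u_i-1)\le u_i+\frac{1}{6}(e_i-u_i)$ is proved (Theorem~\ref{thm:bound}), and the result is summed against the two budgets $\sum_i e_i\le n$ and $\sum_i u_i=u^*$. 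So the proposal starts from the correct identity for $|f'|$ and the right structural facts, but it is missing the two ingredients any proof of this bound needs: a charging scheme that avoids double counting by construction rather than by an unproven repair, and an actual solution of the resulting extremal problem producing the stated constants.
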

The main theorem of~\cite{momke} follows from these two bounds
\begin{theorem}[Theorem 1.1 in~\cite{momke}]
\label{thm:momke-main}
There exists a polynomial time approximation algorithm for graphic TSP with approximation ratio $\frac{14(\sqrt{2}-1)}{12\sqrt{2}-13} < 1.461$.
\end{theorem}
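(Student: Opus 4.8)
The plan is to run two algorithms and return the cheaper tour: the algorithm of Christofides, whose cost is controlled by Theorem~\ref{thm:christofides}, and a circulation-based algorithm resting on Theorem~\ref{thm:tour}. The point to appreciate first is that the circulation bound alone is too weak when $\OPT_{LP}(G)$ is close to $n$ — but that is exactly the regime in which Christofides is strong, so neither algorithm suffices in isolation while their \emph{minimum} does.

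By Lemma~\ref{lem:two-connect} it is enough to treat a $2$-vertex-connected graph $G$. First I would compute an extreme optimal solution $x^*$ of $LP(G)$, discard the zero-weight edges so that $|E|\le 2n-1$ by Theorem~\ref{thm:cornuejols}, build the DFS tree $T$ that always descends along the edge with largest $x^*_e$, and form $f=f'+f''$ in $C(G,T)$ exactly as described above; separately run Christofides to obtain a second candidate tour. Combining $|f|\le|f'|+|f''|$ with Lemma~\ref{lem:f1-momke} and Lemma~\ref{lem:f2},
\[
|f|\;\le\;(6-6\sqrt2)\,n+(4\sqrt2-3)\,\OPT_{LP}(G),
\]
so Theorem~\ref{thm:tour} yields a tour of cost at most $\tfrac43 n+\tfrac23|f|-\tfrac23\le\bigl(\tfrac{16}{3}-4\sqrt2\bigr)n+\tfrac{8\sqrt2-6}{3}\,\OPT_{LP}(G)$, whereas Theorem~\ref{thm:christofides} yields one of cost at most $n+\tfrac12\OPT_{LP}(G)$. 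Since the LP is a relaxation, $\OPT_{LP}(G)\le\OPT$, so writing $\rho=n/\OPT_{LP}(G)$ the approximation factor achieved by the better of the two tours is at most
\[
\min\!\Bigl(\rho+\tfrac12,\;\bigl(\tfrac{16}{3}-4\sqrt2\bigr)\rho+\tfrac{8\sqrt2-6}{3}\Bigr).
\]

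It then remains to maximize this minimum over the feasible range $0<\rho\le1$ (recall $n\le\OPT_{LP}(G)$). The first term is increasing in $\rho$ and the second is decreasing, since its slope $\tfrac{16}{3}-4\sqrt2$ is negative; hence the minimum is largest at the value $\rho^*$ where the two expressions agree. Solving that linear equation gives $\rho^*=\frac{16\sqrt2-15}{2(12\sqrt2-13)}$, which lies in $(0,1]$, and therefore a worst-case factor of $\rho^*+\tfrac12=\frac{14(\sqrt2-1)}{12\sqrt2-13}<1.461$. All of the above — the LP, the DFS tree, the circulations $f',f''$, and Christofides — is computable in polynomial time, so this proves the theorem. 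The only work left is routine bookkeeping: checking that the two linear bounds really do cross inside the feasible $\rho$-range and that the additive $-\tfrac23$ term can only help; the genuine content has already been delivered by Lemma~\ref{lem:f1-momke}, Lemma~\ref{lem:f2} and Theorem~\ref{thm:tour}.
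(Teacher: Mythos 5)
Your proposal is correct and is essentially the argument the paper intends: it states Theorem~\ref{thm:momke-main} as following from Lemma~\ref{lem:f1-momke} and Lemma~\ref{lem:f2} via Theorem~\ref{thm:tour}, balanced against Christofides (Theorem~\ref{thm:christofides}) exactly as in~\cite{momke}, which is what you carry out. Your arithmetic checks out: $|f|\le(6-6\sqrt2)n+(4\sqrt2-3)\OPT_{LP}$, the crossing point $\rho^*=\frac{16\sqrt2-15}{2(12\sqrt2-13)}\in(0,1]$, and the resulting ratio $\rho^*+\tfrac12=\frac{14(\sqrt2-1)}{12\sqrt2-13}$.
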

%We provide a sketch of the proof of this theorem, since we are going to later repeat it using a stronger version of Lemma~\ref{lem:f1-momke}.
%\begin{proof}
%Joining bounds of Lemma~\ref{lem:f2} and~\ref{lem:f1-momke} we get
%\[ |f| \le \OPT_{LP}(G)-n + n(7-6\sqrt{2})+4(\sqrt{2}-1)\OPT_{LP}(G) = 6(1-\sqrt{2})n + (4\sqrt{2}-3)\OPT_{LP}(G).\]
%The TSP tour guaranteed by Theorem~\ref{thm:tour} has size bounded by 
%\[ \frac{4}{3} n + \frac{2}{3}|f| \le \frac{4}{3}n + \frac{2}{3}\left( 6(1-\sqrt{2})n + (4\sqrt{2}-3)\OPT_{LP}(G) \right).\]
%Balancing this with the algorithm of Christofides (see Theorem~\ref{thm:christofides}) we see that the worst case occurs when
%\[ \OPT_{LP}(G) = \frac{24\sqrt{2}-26}{16\sqrt{2}-15}n.\]
%For this value of \(\OPT_{LP}(G)\) we get the same bound on the approximation ratio of both algorithms, equal to
%\[ \frac{\frac{1}{2}\OPT_{LP}(G) + n}{\OPT_{LP}(G)} \le \frac{14(\sqrt{2}-1)}{12\sqrt{2}-13}.\]
%\end{proof}

\section{New upper bound for $|f'|$}
\label{sec:bound1}
In this section we describe an improved bound on $|f'|$.
\begin{lemma}
\label{lem:f1}
\[|f'| \le \frac{5}{3}\OPT_{LP} - \frac{3}{2} n.\]
\end{lemma}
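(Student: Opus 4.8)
The plan is to analyze the structure of the circulation $f'$ directly in terms of the extreme optimal solution $x^*$ of $LP(G)$, exploiting the knapsack-like nature of the problem. Recall $f'$ sends flow $\min(x_a^*,1)$ along the cycle closed by each back-arc $a$, and the cost $|f'| = \sum_{v \in \mathcal{I}} \max(f'(B(v))-1,0)$ charges, at each in-vertex, everything beyond the first free unit. Since each in-vertex $v$ has exactly one outgoing tree-arc, and the back-arcs entering $v$ carry total flow $f'(B(v))$, the quantity $\max(f'(B(v))-1,0)$ is what we must control. I would first rewrite $|f'|$ as $\sum_a \min(x_a^*,1) - (\text{total flow that is "free"})$, i.e. $|f'| \le \sum_{a \text{ back}} \min(x_a^*,1) - z$, where $z$ counts one free unit per in-vertex that actually receives back-flow; more carefully, $|f'| = \sum_{v\in\mathcal I}\max(f'(B(v))-1,0)$, and I want both an upper bound on $\sum_a\min(x_a^*,1)$ and a lower bound on the total amount subtracted.

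**The two budget constraints.**

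The key observation is that there are two linear constraints tying these quantities together. First, $\sum_{e\in E} x_e^* = \OPT_{LP}$ by optimality. Second, the number of edges is bounded: by Theorem~\ref{thm:cornuejols} the support has size at most $2n-1$, and since $G$ is 2-vertex-connected and $T$ is a spanning tree, there are exactly $n-1$ tree-edges, hence at most $n$ back-edges, and the tree-edges themselves carry $x^*$-values whose contribution must be accounted for. The idea is to set this up as a fractional (2-dimensional) knapsack problem: each back-arc $a$ contributes $\min(x_a^*,1)$ to a "cost" we are bounding from above, while consuming $x_a^*$ units of the $\OPT_{LP}$ budget and $1$ unit of the cardinality budget (there are at most $n$ back-arcs). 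The worst case for the ratio $\min(x_a^*,1)/x_a^*$ is when $x_a^*$ is small, but then the back-arc still consumes a full unit of cardinality budget; conversely setting $x_a^*=1$ is cheap in cardinality per unit of cost but expensive in the $\OPT_{LP}$ budget once one also accounts for the tree-edges, which each need $x_e^*\ge$ something and there are $n-1$ of them. Balancing these two constraints via an LP-duality / exchange argument is what yields the coefficients $\frac53$ and $\frac32$.

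**Carrying out the optimization.**

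Concretely, I would let $p$ = number of back-arcs with $x_a^* \ge 1$ (each contributing exactly $1$ to $\sum\min(x_a^*,1)$ and $\ge 1$ to the $\OPT_{LP}$ budget) and handle the back-arcs with $x_a^* < 1$ together with the tree-edges. Writing $B = \sum_{a:x_a^*<1} x_a^*$ and using that the number of such back-arcs is at most $n - 1 - (\text{tree-edges with }x^*\text{ small})$... — the bookkeeping here is the technical heart. One sets up: maximize $\sum_a \min(x_a^*,1)$ minus the free units, subject to $\sum x_e^* = \OPT_{LP}$ and $|E| \le 2n-1$ (equivalently at most $n$ back-edges), over all valid extreme-point configurations. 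I expect the extremal configuration to have all nontree-edge values equal to some common value $\theta \in (0,1]$, reducing everything to optimizing a one-parameter family; plugging in and simplifying should give the bound $\frac53 \OPT_{LP} - \frac32 n$ exactly at the balance point.

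**Main obstacle.**

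The main difficulty is not the knapsack optimization itself but correctly bounding the number of back-arcs entering each in-vertex and the amount of "free" flow, i.e. getting a tight lower bound on $\sum_{v\in\mathcal I}\min(f'(B(v)),1)$ in terms of the structure. The subtlety is that several back-arcs may enter the same in-vertex (sharing one free unit), which is bad, versus being spread out (each getting its own free unit), which is good — and one must argue that the DFS tree construction, together with the rule of following the heaviest $x^*$-edge, forces enough spreading. I would handle this by charging free units against tree-arcs or against the $2n-1$ edge bound rather than trying to track the in-vertex multiplicities exactly, so that the cardinality constraint absorbs the combinatorial complication and the final estimate reduces to the clean two-constraint knapsack described above.
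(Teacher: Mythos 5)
There is a genuine gap: your two ``budget constraints'' are not strong enough to imply the lemma, and the ingredient you explicitly propose to avoid --- tracking the per-in-vertex multiplicities --- is precisely where the proof lives. With only the constraints ``total weight $\sum_e x_e^* = \OPT_{LP}$'' and ``at most $n$ back-arcs'' (plus $2n-1$ edges overall), the claimed bound is false: take $n$ back-arcs of value $1$ all entering a single in-vertex and give every tree-edge negligible weight; this satisfies your constraints with $\OPT_{LP}\approx n$, yet the cost at that one in-vertex is about $n-1$, far exceeding $\frac{5}{3}\OPT_{LP}-\frac{3}{2}n\approx\frac{n}{6}$. What rules such configurations out is not the global weight budget but the Held--Karp \emph{degree} constraints $x^*(v)\ge 2$ at each vertex of $G$: the paper decomposes the flow into each in-vertex as $f'(B(v)) = l_v + u_v$ with $l_v \le 2 - x^*_{t_v}$, and shows (Fact~\ref{fact:u-bound}) that the total excess satisfies $u^* \le 2(\OPT_{LP}-n)$, because any flow beyond $2-x^*_{t_v}$ at an in-vertex forces extra fractional degree at its parent vertex. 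That degree-surplus quantity $u^*$, not $\OPT_{LP}$ itself, is the second knapsack dimension; your version, which charges each back-arc only against the global weight $\OPT_{LP}$, lets the adversary concentrate flow at one in-vertex while starving the tree-edges, which a real LP solution cannot do.

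The other missing piece is the per-in-vertex edge count. The paper uses the heaviest-edge DFS rule to get $|B(v)| \ge \lceil f'(B(v))/\min(x^*_{t_v},1)\rceil$ (Fact~\ref{fact:edge-bound}), sums these counts against the Cornuejols--Fonlupt--Naddef bound of $2n-1$ edges (hence at most $n$ back-edges), and then proves the termwise inequality $\max(0,l_i+u_i-1)\le u_i+\frac{1}{6}(e_i-u_i)$, where the constant $\frac{1}{6}$ comes exactly from the \emph{integrality} of the per-in-vertex count $e_i$ (the tight cases are $e_i\in\{2,3\}$ with $x_i=\frac{2}{3},\frac{1}{2}$); combining with $u^*\le 2(\OPT_{LP}-n)$ gives $|f'|\le \frac{5}{6}u^*+\frac{1}{6}n\le\frac{5}{3}\OPT_{LP}-\frac{3}{2}n$. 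Your plan to ``charge free units against tree-arcs or the $2n-1$ bound rather than track in-vertex multiplicities'' discards both the integrality of $e_i$ and the coupling between back-arc values and the tree-arc value $x^*_{t_v}$, and your guess that the extremal configuration has all non-tree values equal to a common $\theta$ is also not what happens. So while the 2-dimensional-knapsack intuition matches the paper in spirit, the proposal as written cannot be completed without importing the $l_v/u_v$ decomposition, the degree-based bound on $u^*$, and the per-in-vertex ceiling argument.
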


Before presenting our analysis of the cost of $f'$ let us recall some notation and basic observations introduced in~\cite{momke}.
For any $v \in \mathcal{I}$ let $t_v$ be the (unique) outgoing arc of $v$.

\begin{fact}
\label{fact:edge-bound}
For every in-vertex $v$, we have $|B(v)| \ge \left\lceil \frac{f'(B(v))}{\min(x_{t_v}^*,1)}\right\rceil$.
\end{fact}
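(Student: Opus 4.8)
\textbf{Proof plan for Fact~\ref{fact:edge-bound}.}
The plan is to unwind the definitions of $f'$ and $B(v)$ and then bound the per-arc flow from above. Fix an in-vertex $v$ with outgoing arc $t_v$. Recall that $f'$ is built by sending, along each back-arc $a$, a flow of $\min(x_a^*,1)$ around the unique cycle consisting of $a$ together with tree-arcs. By the construction of the gadgets, the back-arcs redirected into $v$ are precisely those back-arcs $a=(u,\cdot)$ that leave the subtree rooted at the child $w$ of the original vertex associated with $v$ and enter that vertex; crucially, every such cycle passes through the single outgoing tree-arc $t_v$. Hence $f'(t_v)=\sum_{a\in B(v)} \min(x_a^*,1) = f'(B(v))$, so the flow on the outgoing arc equals the total flow entering $v$.

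The next step is to observe that $f'(t_v)\le x^*_{t_v}$ and also $f'(t_v)\le |B(v)|$, since $f'(t_v)$ is a sum of $|B(v)|$ terms each of which is $\min(x_a^*,1)\le 1$. For the first inequality, one uses that the flow on the tree-arc $t_v$ is a sum over a subset of the back-arcs crossing the corresponding cut, each contributing at most its LP value $x_a^*$, and the LP constraint on that cut (or a direct edge-count argument on the DFS tree, exactly as in~\cite{momke}) gives $f'(t_v)\le x^*_{t_v}$; combining with $f'(t_v)\le |B(v)|$ yields $f'(B(v)) = f'(t_v)\le \min(x^*_{t_v},1)\cdot|B(v)|$ once we also note each back-arc carries at most $1$ unit, so in fact $f'(t_v)\le \min(x^*_{t_v},1)\cdot |B(v)|$ is the clean bound. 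Wait --- more carefully: each of the $|B(v)|$ summands is at most $\min(x^*_a,1)$, but that is not obviously $\le \min(x^*_{t_v},1)$; the honest route is $f'(B(v))\le x^*_{t_v}$ and $f'(B(v))\le |B(v)|\cdot 1$, hence $f'(B(v))\le |B(v)|\cdot\min(x^*_{t_v},1)$ only after checking $f'(B(v))/|B(v)|\le\min(x^*_{t_v},1)$, which follows from $f'(B(v))\le x^*_{t_v}$ when $x^*_{t_v}\le 1$ and from $f'(B(v))\le|B(v)|$ when $x^*_{t_v}>1$.

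Finally, dividing through gives $|B(v)|\ge f'(B(v))/\min(x^*_{t_v},1)$, and since $|B(v)|$ is an integer we may round up, obtaining $|B(v)|\ge\lceil f'(B(v))/\min(x^*_{t_v},1)\rceil$. The one genuinely delicate point --- and the step I would expect to require the most care --- is justifying $f'(t_v)\le x^*_{t_v}$ (equivalently $f'(B(v))\le x^*_{t_v}$): this is where the choice of $T$ as a DFS tree and the specific redirection of back-arcs matters, and it is essentially the content of the corresponding observation in~\cite{momke}; everything else is bookkeeping with the definition of the cost/flow. I would cite or reproduce that observation and then assemble the inequalities as above.
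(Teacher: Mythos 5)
There is a genuine gap: the inequality your argument hinges on, $f'(B(v))\le x^*_{t_v}$, is simply false in general. The flow $f'(B(v))=\sum_{a\in B(v)}\min(x^*_a,1)$ is a sum over all back-arcs entering $v$, and nothing in the Held--Karp LP bounds this sum by the value of the single tree edge $t_v$: the cut constraints give \emph{lower} bounds of the form $x(\delta(T_w))\ge 2$, not upper bounds. For instance, with $x^*_{t_v}=\tfrac12$ and three back-edges of value $\tfrac12$ each entering $v$ one gets $f'(B(v))=\tfrac32>x^*_{t_v}$; indeed the paper's own tight configurations (e.g.\ $x_i=\tfrac12$, $l_i=\tfrac32$) and the whole point of the cost term $\max(f(B(v))-1,0)$ presuppose that $f'(B(v))$ routinely exceeds $1\ge\min(x^*_{t_v},1)$. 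Consequently your case ``$x^*_{t_v}\le 1$'' cannot be closed, and the proposal as written does not prove the fact. (A secondary inaccuracy: $f'(t_v)$ is in general strictly larger than $f'(B(v))$, since cycles of back-arcs going from the subtree below $v$ to ancestors strictly above $v$ also pass through $t_v$; but this is not the main problem.)

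The step you explicitly set aside --- the termwise bound $\min(x^*_a,1)\le\min(x^*_{t_v},1)$ for every $a\in B(v)$ --- is exactly what holds and is the entire content of the paper's one-line proof. The tree $T$ is not an arbitrary DFS tree: it is constructed by always following the edge of largest $x^*$ value. Every back-arc $a\in B(v)$ corresponds to an edge $\{u,v'\}$ of $G$, where $v'$ is the original vertex that was split and $u$ is a descendant of the child $w$ reached via $t_v$; at the moment the DFS at $v'$ selected the edge $t_v$, the vertex $u$ was still unvisited, so the edge $\{u,v'\}$ was an available choice and the greedy rule gives $x^*_{t_v}\ge x^*_a$. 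Hence each of the $|B(v)|$ summands of $f'(B(v))$ is at most $\min(x^*_{t_v},1)$, so $f'(B(v))\le |B(v)|\cdot\min(x^*_{t_v},1)$, and integrality of $|B(v)|$ yields the ceiling. Your proposal never invokes this greedy choice of $T$, which is the missing (and essential) idea.
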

\begin{proof}
Since $T$ was constructed by always following the arc $a$ with the highest value of $x_a^*$, we have that $x_{t_v}^* \ge x_a$ for any
$a \in B(v)$ and the claim follows.
\end{proof}

Decompose $f'(B(v))$ into two parts: $l_v = \min(2-x_{t_v}^*, f'(B(v)))$ and $u_v = f'(B(v))-l_v$. The intuition here is that the higher
$u_v$ is, the larger $\OPT_{LP}(G)$ is. In particular, if we let $u^* = \sum_{v \in \mathcal{I}} u_v$, then
\begin{fact}[Stated in the proof of Claim 5.4 in~\cite{momke}]
\label{fact:u-bound}
\[ u^* \le 2(\OPT_{LP}(G)-n).\]
\end{fact}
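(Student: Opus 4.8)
**Proof proposal for Fact~\ref{fact:u-bound} ($u^* \le 2(\OPT_{LP}(G)-n)$).**

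The plan is to relate the total "excess" flow $u^* = \sum_{v\in\mathcal I} u_v$ to the cost of the LP solution $x^*$ via the amount of flow pushed through the tree-arcs. First I would recall that $\OPT_{LP}(G) = \sum_{e\in E} x_e^*$, and that the constraint $x^*(\delta(S))\ge 2$ applied to the singleton sets (and more generally to subtrees) forces a lower bound of the form $\OPT_{LP}(G) \ge n + (\text{something})$. Concretely, since $\OPT_{LP}(G)\ge n$ always (each vertex has degree at least $2$ in $x^*$, so $\sum_e x_e^* = \tfrac12\sum_v x^*(\delta(v)) \ge n$), the quantity $\OPT_{LP}(G)-n$ measures exactly the slack $\tfrac12\sum_v (x^*(\delta(v))-2)$.

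The key step is to show that $u_v$ is charged against the flow that $f'$ sends along the tree-arc $t_v$ beyond what a "tight" vertex would need. Recall $l_v = \min(2-x_{t_v}^*, f'(B(v)))$, so $u_v>0$ only when $f'(B(v)) > 2 - x_{t_v}^*$, i.e. when $f'(B(v)) + x_{t_v}^* > 2$. By construction of $f'$ — each back-arc $a$ carries flow $\min(x_a^*,1)\le x_a^*$ along the cycle it closes with tree-arcs — the flow $f'(t_v)$ entering the subtree below $v$ through $t_v$ equals $f'(B(v))$ (all back-arcs counted at $v$ are exactly those that cross $t_v$), and it is at most $x^*(\delta(\text{subtree below } v)) - x_{t_v}^*$ minus the contribution already accounted for, so that $f'(B(v)) + x_{t_v}^* \le x^*(\delta(S_v))$ where $S_v$ is the vertex set of the subtree hanging below $v$. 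Hence $u_v = \max(0, f'(B(v)) + x_{t_v}^* - 2) \le \max(0, x^*(\delta(S_v)) - 2)$. Since $f'$ is built from cycles that use each back-arc once, these cuts $\delta(S_v)$ can be organized so that each edge $e$ of $G$ is counted with total multiplicity at most $2x_e^*$ across all the terms (each back-arc contributes to the two endpoints of its span, each tree-arc to at most its two incident levels), giving $u^* = \sum_v u_v \le \sum_v (x^*(\delta(S_v)) - 2)\cdot[\text{when positive}] \le 2\sum_e x_e^* - 2n = 2(\OPT_{LP}(G)-n)$.

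I expect the main obstacle to be making the charging argument in the last step precise: one must verify that when we sum $x^*(\delta(S_v))$ over all in-vertices $v$ with $u_v>0$, no edge of $G$ is overcounted relative to the budget $2\OPT_{LP}(G)$, and that the $-2$ per vertex correctly consumes the $-2n$ (this is where $2$-vertex-connectivity and the DFS structure, which guarantees each $S_v$ is a proper nonempty subtree with $x^*(\delta(S_v))\ge 2$, are used). A cleaner route, likely the one intended, is to avoid cuts entirely: observe directly that $u_v \le f'(B(v)) + x_{t_v}^* - 2$ when positive, that $\sum_v f'(B(v)) \le \sum_{a\ \text{back-arc}} x_a^* \cdot (\text{length of its tree-path})$ telescopes, and that summing $x_{t_v}^*$ over tree-arcs plus $x_a^*$ over back-arcs (each back-arc's contribution to the excess at its top endpoint only) recovers $2\OPT_{LP}(G) - 2n$ after accounting for the free first unit at each in-vertex. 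I would write out whichever of these bookkeeping identities is shortest.
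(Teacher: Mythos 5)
Your per-element inequality is fine: indeed $u_v=\max(0,f'(B(v))+x^*_{t_v}-2)$ and $f'(B(v))\le x^*(B(v))$, so $u_v\le\max(0,z_v-2)$ where $z_v:=x^*_{t_v}+x^*(B(v))$ is the $x^*$-mass of the edges joining the parent vertex to the subtree hanging below $v$. The genuine gap is in the summation step. You inflate $z_v$ to the full subtree cut $x^*(\delta(S_v))$ and then claim that, summing over the in-vertices with $u_v>0$, every edge is counted with multiplicity at most $2$. That is not true: an edge from a deep vertex to a high ancestor crosses $\delta(S_v)$ for \emph{every} in-vertex $v$ on the tree path between its endpoints, and many of these can have $u_v>0$ simultaneously. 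For instance, take a DFS path $r,p_1,\dots,p_m,b$ with tree edges of value slightly above $1$ and back edges of value $1$ from $b$ to each of $r,p_1,\dots,p_{m-1}$: then every in-vertex along the path has $u_v>0$, the sum of the cut excesses $x^*(\delta(S_v))-2$ grows quadratically in $m$, while $2(\sum_e x^*_e-n)$ is only linear. So the chain $\sum_{v:u_v>0}\bigl(x^*(\delta(S_v))-2\bigr)\le 2(\OPT_{LP}(G)-n)$ cannot be obtained by your counting and is false in general; your "cleaner route" at the end does not repair this, since the telescoping bound on $\sum_v f'(B(v))$ is neither needed (each back-arc enters exactly one in-vertex) nor does the sketch ever say where the $-2n$ comes from.

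The paper's proof stays with $z_w$ and never passes to subtree cuts. The key observation you are missing is a disjointness/charging statement at the \emph{parent} vertex: for the in-vertices $w\in\mathcal{I}_p$ in the gadget of an original vertex $p$, the edge sets $\{t_w\}\cup B(w)$ are pairwise disjoint sets of edges all incident to $p$, so each $w$ with $u_w>0$ contributes at least $z_w\ge 2+u_w$ to the fractional degree $x^*(p)$; combined with the Held--Karp degree bound $x^*(p)\ge 2$ (which supplies the contribution of vertices with no such $w$), this gives $x^*(p)\ge 2+\sum_{w\in\mathcal{I}_p}u_w$ for every vertex $p$, and summing fractional degrees over all $n$ vertices yields $2\OPT_{LP}(G)\ge 2n+u^*$. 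In short: charge $u_w$ to the degree of the single parent vertex, where the relevant edge groups partition the incident edges, rather than to overlapping subtree cuts.
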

\begin{proof}
Consider a vertex $v$ of $G$ which (in the construction of $C(G,T)$) is replaced by a gadget with a set $\mathcal{I}_v$ of in-vertices, and let $x^*(v)$ be
the fractional degree of $v$ in $x^*$. Since for any $w \in \mathcal{I}_v$, the tree-arc $t_w$ and all the back-arcs entering $w$ correspond to edges of $G$ 
incident to $v$, each such $w$ contributes at least $2+u_w$ to $x^*(v)$, provided that $u_w > 0$ (if $u_w = 0$ we cannot bound $w$'s contribution in any way).
Since we also know that $x^*(v) \ge 2$ (this is one of the inequalities of the Held-Karp relaxation), we get the following bound
\[ x^*(v) \ge \max\left(2,\sum_{w \in \mathcal{I}_v,\, u_w > 0} (2 + u_w)\right) \ge 2 + \sum_{w \in \mathcal{I}_v} u_w.\]
Summing this over all vertices we get
$ 2\OPT_{LP}(G) \ge 2n + u^*,$
and the claim follows.
\end{proof}
Because of Theorem~\ref{thm:cornuejols} and Fact~\ref{fact:edge-bound} we have
\[ \sum_{v \in \mathcal{I}} \left\lceil \frac{l_v+u_v}{\min(1,x_{t_v}^*)} \right\rceil \le n.\] 
Also note that in terms of $l_v$ and $u_v$ the total cost of $f'$ is given by the following formula
\[ \sum_{v \in \mathcal{I}} \max(0,l_v+u_v-1).\]
Our goal is to bound this cost as a function of $n$ and $u^*$. Instead of working directly with $G$ and the solution $x^*$ to the corresponding $LP(G)$, 
we abstract out the key properties of $x_{t_v}^*$, $l_v$ and $u_v$ and work in this restricted setting.

\begin{definition}
A \emph{configuration} of size $n$ is a triple $(x,l,u)$, where
$x,l,u:\{1,\ldots,n\} \rightarrow \mathbb{R}_{\ge 0}$ such that:
\begin{enumerate}
\item $0 < x_i \le 1$, 
\item $l_i \le 2-x_i$, and
\item $u_i > 0 \implies l_i = 2-x_i$ 
\end{enumerate}
hold for all $i=1,\ldots,n$.
\end{definition}

\begin{definition}
Let $C=(x,l,u)$ be a configuration. We will say that the $i$-th element of $C$ 
uses $\lceil \frac{l_i+u_i}{x_i} \rceil$ \emph{edges} and denote this number by $e_i(C)$, or $e_i$ if 
it is clear what $C$ is. We will also say that $C$ uses $\sum_{i=1}^n e_i$ edges.

Also, the \emph{value} of $C$ is defined as $\val(C) = \sum_{i=1}^n \max(0,l_i+u_i-1)$.
\end{definition}

\begin{remark}
The values $x_i$, $l_i$ and $u_i$ correspond to $x_{t_v}$, $l_v$ and $u_v$, respectively. The properties enforced on the former are 
clearly satisfied by the latter with the exception of the inequalities $x_i \le 1$. The reason for introducing these inequalities is the
following. Without them, the natural definition of the number of edges used by the $i$-th element of $C$ would be
\[ \left\lceil \frac{l_i+u_i}{\min(x_i,1)} \right\rceil.\]
However, in that case, for any configuration $C$ there would exists a configuration $C'$ with $\val(C') \le \val(C)$ and $x_i \le 1$
for all $i=1,\ldots,n$. In order to construct $C'$ simply replace all $x_i > 1$ with ones. If as a result we get $l_i < 2-x_i$ and $u_i > 0$
for some $i$, simultanously decrease $u_i$ and increase $l_i$ until one of these inequalities becomes an equality.

For that reason, we prefer to simply assume $x_i \le 1$ and be able to use a (slightly) simpler definition of $e_i$. As we will see, the inequalities $x_i \le 1$
turn out to be quite useful as well.
\end{remark}

We denote by $\CONF(n,u^*)$ the set of all configurations $(x,l,u)$ of size $n$ such that $\sum_{i=1}^n u_i = u^*$.
We also use $\OPT(n,u^*)$ to denote any maximum value element of $\CONF(n,u^*)$, and $\VAL(n,u^*)$ to denote it's
value. It is easy to see that
\begin{fact}
\label{fact:cost-vs-value}
\( |f'| \le \VAL(n,u^*).\)
\end{fact}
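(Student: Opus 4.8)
The plan is to turn $f'$ directly into a configuration of size $n$ and then appeal to the definition of $\VAL$. The in-vertices $v\in\mathcal I$ already carry the three numbers we need, $x^*_{t_v}$, $l_v$ and $u_v$. By construction $l_v+u_v=f'(B(v))$, $l_v\le 2-x^*_{t_v}$, and $u_v>0$ forces $l_v=2-x^*_{t_v}$, so the triples $(x^*_{t_v},l_v,u_v)$ satisfy every axiom of a configuration except that $x^*_{t_v}$ need not be $\le 1$. I would first discard the in-vertices with $l_v+u_v=0$ (they contribute nothing to anything), then — having checked at most $n$ remain — pad the list with trivial elements $(1,0,0)$ up to size exactly $n$, calling the result $\widetilde C$; for now some of its $x_i$ may exceed $1$, so I read $e_i$ as $\lceil (l_i+u_i)/\min(1,x_i)\rceil$.

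Next I would verify the two quantitative properties. For the value, $\val(\widetilde C)=\sum_v\max(0,l_v+u_v-1)=\sum_{v\in\mathcal I}\max(0,f'(B(v))-1)=|f'|$, using the formula for the cost of $f'$ recalled above and the fact that discarded and padding elements contribute $0$. For the edge count, the in-vertex elements have $e_i=\lceil (l_v+u_v)/\min(1,x^*_{t_v})\rceil$ and the padding elements have $e_i=0$, so $\sum_i e_i=\sum_{v\in\mathcal I}\lceil (l_v+u_v)/\min(1,x^*_{t_v})\rceil\le n$ by the displayed bound obtained above from Fact~\ref{fact:edge-bound} and Theorem~\ref{thm:cornuejols}; in particular at most $n$ in-vertices survive the pruning (each uses $\ge 1$ edge), so the padding step is legitimate. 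The remaining configuration axioms hold element by element by the definition of $l_v$ and $u_v$ (and trivially for padding).

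It remains to repair the ``$x_i>1$'' defect, which is exactly the content of the Remark following the definition of a configuration: replace every $x_i>1$ by $1$ and, wherever this breaks the third axiom, shift the requisite amount of mass from $u_i$ into $l_i$. Each $l_i+u_i$ is preserved, so every value term stays put, and each $e_i$ stays put as well since $\lceil (l_i+u_i)/\min(1,x_i)\rceil$ does not change when $x_i$ drops from $>1$ to $1$; the only effect is that $\sum_i u_i$ can decrease. This yields a genuine configuration $C'$ of size $n$ with $\val(C')=|f'|$, at most $n$ edges, and $\sum_i u_i=u^{**}$ for some $u^{**}\le u^*$, i.e.\ $C'\in\CONF(n,u^{**})$; hence $|f'|=\val(C')\le\VAL(n,u^{**})$. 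When $x^*_{t_v}\le 1$ for every in-vertex no shifting happens, $u^{**}=u^*$, and we are already done.

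The one step that is more than bookkeeping is passing from $u^{**}$ to $u^*$, i.e.\ the monotonicity $\VAL(n,u^{**})\le\VAL(n,u^*)$ for $u^{**}\le u^*$. Intuitively a larger $u$-budget can only help — given a configuration attaining $\VAL(n,u^{**})$ one should be able to absorb the surplus $u^*-u^{**}$ (say into an element with $l_i=2-x_i$, which the third axiom then permits) without decreasing the value — but the clean way to see it is presumably to read monotonicity off the explicit description of $\VAL(n,\cdot)$ obtained in the proof of Lemma~\ref{lem:f1}, which is manifestly non-decreasing. Granting this, $|f'|\le\VAL(n,u^{**})\le\VAL(n,u^*)$, as claimed. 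So the whole argument is essentially a transcription of facts already in hand; the only genuinely delicate point is the $x_i>1$ reduction together with the monotonicity of $\VAL(n,\cdot)$.
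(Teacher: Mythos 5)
Your construction is, in substance, exactly the argument the paper has in mind: the paper gives no proof beyond ``it is easy to see,'' and its Remark following the definition of a configuration describes precisely your fix for $x_i>1$ (truncate to $1$ and shift mass from $u_i$ to $l_i$), while the value identity and the edge budget are obtained, as you do, from the cost formula for $f'$ and the displayed consequence of Fact~\ref{fact:edge-bound} and Theorem~\ref{thm:cornuejols}. (The pruning/padding is not even necessary, since $|\mathcal{I}|=n$: there is one in-vertex per tree edge plus the root.)

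The one step you leave genuinely open is the monotonicity $\VAL(n,u^{**})\le\VAL(n,u^*)$, and the justification you point to does not exist: the proof of Lemma~\ref{lem:f1} contains no explicit description of $\VAL(n,\cdot)$, only the upper bound of Theorem~\ref{thm:bound}, and Theorems~\ref{thm:bound} and~\ref{thm:tight} determine $\VAL(n,\cdot)$ only up to an additive $O(1)$, which is not enough to read off monotonicity. The gap is easy to close, but say how: (i) under the knapsack reading the paper itself states for $\CONF(n,u^*)$ (total mass at most $u^*$, total volume at most $n$), your $C'$ with mass $u^{**}\le u^*$ is already feasible, and no monotonicity is needed; or (ii) observe that the Fact is only used in Lemma~\ref{lem:f1}, where $|f'|\le\VAL(n,u^{**})\le\frac{5}{6}u^{**}+\frac{1}{6}n\le\frac{5}{6}u^*+\frac{1}{6}n$ already suffices. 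Your direct suggestion (absorb the surplus into an element with $l_i=2-x_i$) is fine under the literal ``mass exactly $u^*$'' definition, which carries no edge budget, but it increases that element's $e_i$, so it clashes with the volume constraint if one reads the knapsack constraints into $\CONF(n,u^*)$ --- which is exactly the reading needed elsewhere (the summation in Theorem~\ref{thm:bound}); so option (i) or (ii) is the clean way out.
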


Notice that determining $\VAL(n,u^*)$ for given $n$ and $u^*$ is a 2-dimensional knapsack problem. Here, items are the possible 
triples $(x_i,l_i,u_i)$ satisfying the configuration definition. The value of such a triple is equal to $\max(0,l_i+u_i-1)$,
i.e. it's contribution to the configuration value, if used in one. Also, the ,,mass'' of $(x_i,l_i,u_i)$ is $u_i$ and it's 
,,volume'' is $e_i$. We want to maximize the total item value, while keeping tht total mass $\le u^*$ and total volume $\le n$.

\begin{lemma}
\label{lem:properties}
For any $n \in \mathbb{N}, u^* \in \mathbb{R}_{\ge 0}$, there exists an optimal configuration in $\CONF(n,u^*)$ such that:
\begin{enumerate}
\item $e_i = \frac{l_i+u_i}{x_i}$ for all $i=1,\ldots,n$,
\item $(l_i = 0) \vee (l_i = 2-x_i)$ for all $i=1,\ldots,n$.
\end{enumerate}
\end{lemma}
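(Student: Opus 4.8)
The plan is to prove Lemma~\ref{lem:properties} by a local exchange argument: starting from an arbitrary optimal configuration $C=(x,l,u)\in\CONF(n,u^*)$, I will modify individual elements so that the two properties hold, while never decreasing $\val(C)$, never increasing the number of edges used, and never changing $\sum_i u_i$. Since $\CONF(n,u^*)$ together with the edge-budget constraint is exactly the feasible region over which $\VAL(n,u^*)$ is the maximum (Fact~\ref{fact:cost-vs-value} and the surrounding discussion make this precise), any such modification preserves optimality, and after finitely many steps we reach a configuration with both properties.

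For property~1, fix an index $i$ and suppose $e_i = \lceil \frac{l_i+u_i}{x_i}\rceil > \frac{l_i+u_i}{x_i}$, i.e. the ceiling is not tight. The idea is to increase $x_i$ up to some $x_i' \le 1$ so that $\frac{l_i+u_i}{x_i'}$ increases toward $e_i$ but the ceiling $e_i$ is unchanged (this is possible because we have strict inequality, and $x_i$ can only go up to $1$; if hitting $x_i'=1$ still leaves slack, the ceiling simply stays the same and that is fine, we just may not achieve exact equality this way — so I should be a bit more careful here). A cleaner route: once $x_i$ is pushed as high as allowed, adjust $l_i$ (and possibly $u_i$, keeping $\sum u_i$ fixed by a compensating change, or rather noting that $u_i$ need not move if we only touch $l_i$) to fill the gap. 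Concretely, increasing $l_i$ increases $\frac{l_i+u_i}{x_i}$ and increases $\max(0,l_i+u_i-1)$, i.e. it can only help the value, as long as we respect $l_i \le 2-x_i$; and if $l_i$ is capped at $2-x_i$ while the fraction is still below $e_i$, we may instead lower $e_i$'s "allotment" by decreasing $x_i$... The point I want to make in the writeup is that among all configurations with the given edge count $\sum e_i$ and mass $u^*$ and the same value, we may always assume each $e_i$ is exactly $\frac{l_i+u_i}{x_i}$, essentially because the constraint binding the number of edges is a sum of ceilings and we lose nothing by saturating each ceiling (raising $l_i$ to make the fraction equal the ceiling only raises the value, never the edge count). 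I will present this as: if $e_i > \frac{l_i+u_i}{x_i}$, increase $l_i$; this keeps $l_i \le 2-x_i$ feasible as long as $l_i < 2 - x_i$, and increases both the fraction and the value; repeat until either $\frac{l_i+u_i}{x_i} = e_i$ (done) or $l_i = 2-x_i$, in which case I then decrease $x_i$ slightly, which increases the fraction further and, by property~3, is still fine (keeping $l_i = 2-x_i$); eventually equality is forced.

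For property~2, assume property~1 already holds. Fix $i$ with $0 < l_i < 2 - x_i$; then by the configuration axioms $u_i = 0$, so the $i$-th element contributes $\max(0, l_i - 1)$ to the value and uses $l_i / x_i$ edges. Now I perturb $l_i$: if $l_i \le 1$ the contribution is already $0$, so I may as well decrease $l_i$ to $0$ (this only decreases the edge count, keeps the value, and leaves axioms satisfied); if $l_i > 1$, note the value $l_i - 1$ and the edge cost $l_i/x_i$ are both linear in $l_i$, so I move $l_i$ to one of its endpoints — either down to $0$ (losing the $l_i-1$ value but also possibly needing to compensate elsewhere) or up to $2 - x_i$. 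Moving $l_i$ up to $2-x_i$ increases the value and increases the edge count; moving it down to $0$ decreases both. The right statement is that by linearity the optimum is attained at an extreme point of the polytope (for fixed $x$), so WLOG each $l_i \in \{0, 2-x_i\}$; I should argue this via a standard "push $l_i$ to a boundary without decreasing value, spending any freed-up edge budget to boost the value of another element" argument, or simply invoke that $\VAL(n,u^*)$, being the max of a linear objective over a region where, once $x$ is fixed, the feasible $(l,u)$ form a polytope, is attained at a vertex of that polytope, and vertices have each $l_i$ at a bound.

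The main obstacle I expect is handling the interaction between the two properties and the global edge budget $\sum_i e_i \le n$ cleanly. An exchange that lowers $e_i$ for one element frees budget that could be reinvested in another element, so the argument is not purely index-by-index; and when I decrease $x_i$ in the property~1 step I must make sure I am not forced to violate $l_i \le 2-x_i$ for some already-processed index. I would circumvent this by phrasing the whole thing as "$\VAL(n,u^*)$ is the value of a linear program (or a finite union of polytopes, one for each choice of which $e_i$ equals which integer and which $l_i$ is at which bound), hence attained at a vertex, and at a vertex the stated tightness conditions hold" — this avoids the bookkeeping of iterated local moves. Either presentation works; I will go with the local-exchange version if the paper's style favors explicit constructions, flagging carefully that each step preserves membership in $\CONF(n,u^*)$, the edge bound, and the value.
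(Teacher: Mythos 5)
Your handling of property 1 ends up coinciding with the paper's argument: when the ceiling is not tight, increase $l_i$ (legitimate because $u_i=0$ whenever $l_i<2-x_i$) until either $\frac{l_i+u_i}{x_i}=e_i$ or $l_i=2-x_i$, and in the latter case decrease $x_i$ while keeping $l_i=2-x_i$; both moves fix $u_i$ and $e_i$ and never decrease the value, so there is no edge-budget bookkeeping at all (the earlier meandering about increasing $x_i$ should simply be cut). The genuine gap is in property 2. There you keep $x$ fixed and try to push $l_i$ to an endpoint by a linearity/extreme-point argument, but for fixed $x$ (and fixed integer $e_i$) the binding constraint at an optimum is typically $l_i+u_i=e_i x_i$ --- exactly the saturation you arranged in property 1 --- and not $l_i\in\{0,2-x_i\}$. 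Concretely, take $x_i=0.6$, $l_i=1.2$, $u_i=0$, so $e_i=2$: among all choices of $l_i$ with this $x_i$ and $e_i\le 2$ this is optimal, yet $0<l_i<2-x_i=1.4$. No perturbation of $l_i$ alone repairs this: pushing $l_i$ down to $0$ loses value, pushing it up to $2-x_i$ forces $e_i=3$ and breaks the budget $\sum_i e_i\le n$, and your fallback of ``reinvesting the freed edge budget in another element'' is precisely the global trade-off the lemma is designed to avoid and is nowhere carried out (quantifying what one freed edge can buy is essentially the content of Theorem~\ref{thm:bound} itself).

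The missing idea is that property 2 requires moving $x_i$, not $l_i$ at fixed $x_i$: with $u_i=0$ and $e_i=\frac{l_i}{x_i}$ already saturated, increase $x_i$ while maintaining $l_i=e_i x_i$ until $l_i+x_i=2$, i.e.\ until $x_i=\frac{2}{1+e_i}$, which is at most $1$ because $e_i\ge 1$ when $l_i>0$. This keeps $e_i$ and $u_i$ unchanged and only increases $l_i$, so the value does not decrease and the element lands exactly on the boundary $l_i=2-x_i$; in the example above $(x_i,l_i)$ moves to $\left(\frac{2}{3},\frac{4}{3}\right)$. With this step, as with property 1, every transformation is a per-index move that leaves all $e_i$ and the total mass $u^*$ untouched, so the interactions with the budget and with other indices that worry you in your last paragraph never arise. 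Without it (or a fully worked-out reinvestment exchange), your proof of property 2 does not go through.
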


\begin{proof}
We prove each property by showing a way to transform any $C \in \CONF(n,u^*)$ into $C' \in \CONF(n,u^*)$ such that $\val(C') \ge \val(C)$ and
$C'$ satisfies the property.

Let us start with the first property, which basically says that all edges are fully saturated. 
Assume we have $e_i > \frac{l_i+u_i}{x_i}$ for some $i\in\{1,\ldots,n\}$. 
If $l_i < 2-x_i$, we increase $l_i$ until either $e_i = \frac{l_i+u_i}{x_i}$, in which case we are done, or $l_i = 2-x_i$. 
In the second case we start decreasing $x_i$ while increasing $l_i$ at the same rate, until $e_i = \frac{l_i+u_i}{x_i}$. 
Clearly, both transformations increase the value of the configuration and keep both $u_i$ and $e_i$ unchanged.

To prove the second property, let us assume that for some $i\in\{1,\ldots,n\}$ we have $0 < l_i < 2-x_i$. We also assume that
our configuration already satisfies the first property, in particular we have $e_i = \frac{l_i}{x_i}$ ($u_i=0$ since $l_i < 2-x_i$).
We increase $x_i$ and keep $l_i = e_i x_i$ until $l_i+x_i = 2$. This increases the value of the configuration and 
keeps $u_i$ and $e_i$ unchanged.
\end{proof}

\begin{theorem}
\label{thm:bound}
For any $n \in \mathbb{N}, u^* \in \mathbb{R}_{\ge 0}$, and any $C \in \CONF(n,u^*)$ we have $\val(C) \le u^* + \frac{1}{6}(n-u^*)$.
\end{theorem}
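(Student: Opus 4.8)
The plan is to first invoke Lemma~\ref{lem:properties} to pass to a structured optimal configuration, and then prove a single ``per-element'' inequality whose sum yields the bound.

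First I would observe that since $C\in\CONF(n,u^*)$ we have $\val(C)\le\VAL(n,u^*)$, so by Lemma~\ref{lem:properties} it suffices to bound $\val(C^*)$ for an optimal configuration $C^*=(x,l,u)\in\CONF(n,u^*)$ satisfying $e_i=\frac{l_i+u_i}{x_i}$ (hence $e_i\in\mathbb{Z}_{\ge 0}$) and $l_i\in\{0,\,2-x_i\}$ for every $i$; recall also that a configuration in $\CONF(n,u^*)$ uses at most $n$ edges, that is $\sum_i e_i\le n$, and that $\sum_i u_i=u^*$. Call an index \emph{inactive} if $l_i=0$ and \emph{active} if $l_i=2-x_i$. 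For an inactive index, condition~3 of the configuration definition together with $x_i\le 1$ (so $2-x_i\ge 1>0=l_i$) forces $u_i=0$; such an index contributes $0$ to $\val(C^*)$, $0$ to $\sum_i u_i$, and uses $e_i=0$ edges. Discarding all inactive indices changes none of the relevant quantities, so we may assume every index is active; then, using $x_i\le 1$ once more, each term of $\val(C^*)$ equals $l_i+u_i-1=1-x_i+u_i\ge 0$.

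Next I would re-express each term through $e_i$. From $e_i=\frac{l_i+u_i}{x_i}=\frac{2+u_i}{x_i}-1$ we get that $e_i$ is a positive integer, $x_i=\frac{2+u_i}{e_i+1}$, and the constraint $x_i\le1$ becomes $0\le u_i\le e_i-1$; substituting,
\[
 1-x_i+u_i \;=\; \frac{(e_i-1)+u_i e_i}{e_i+1}.
\]
The heart of the proof is the claim that for every positive integer $e$ and every real $u$ with $0\le u\le e-1$,
\[
 \frac{(e-1)+ue}{e+1}\;\le\;\tfrac16 e+\tfrac56 u .
\]
I would prove this by noting that the difference of the two sides is affine in $u$, hence attains its minimum over $[0,e-1]$ at an endpoint: at $u=0$ it reads $\tfrac16 e\ge\tfrac{e-1}{e+1}$, i.e.\ $(e-2)(e-3)\ge 0$, which holds for every integer $e$; at $u=e-1$ the left-hand side equals $e-1$ while the right-hand side equals $e-\tfrac56$. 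Summing the claim over all (active) indices and using $\sum_i e_i\le n$ and $\sum_i u_i=u^*$ gives
\[
 \val(C^*)\;=\;\sum_i\bigl(1-x_i+u_i\bigr)\;\le\;\sum_i\Bigl(\tfrac16 e_i+\tfrac56 u_i\Bigr)\;\le\;\tfrac16 n+\tfrac56 u^*\;=\;u^*+\tfrac16(n-u^*),
\]
which is the asserted bound.

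The two endpoint computations and the substitution algebra are routine. The step that needs an idea is setting up the per-element inequality: splitting the target $\tfrac16 n+\tfrac56 u^*$ as $\sum_i(\tfrac16 e_i+\tfrac56 u_i)$ so the problem decouples over elements, and then seeing that after the substitution $x_i=(2+u_i)/(e_i+1)$ the one-variable inequality is affine in $u_i$, so only its endpoints matter. Integrality of $e_i$ (the ``volume'' of the knapsack item) is used exactly at $u_i=0$, via $(e_i-2)(e_i-3)\ge 0$; this is why $\tfrac16$ is the optimal constant here and why a purely knapsack-based analysis cannot do better, and the hypothesis $x_i\le1$ is what confines $u_i$ to the bounded interval $[0,e_i-1]$ that makes the endpoint reduction valid.
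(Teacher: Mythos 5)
Your proof is correct and follows essentially the same route as the paper: reduce to a structured optimal configuration via Lemma~\ref{lem:properties}, substitute $x_i = \frac{2+u_i}{1+e_i}$, and establish a per-element inequality that sums to the bound, with integrality of $e_i$ entering exactly through $(e_i-2)(e_i-3)\ge 0$. The only cosmetic difference is that you minimize the affine-in-$u_i$ gap at the endpoints of $[0,e_i-1]$, whereas the paper argues by cases on the sign of $\frac{1}{6}-\frac{1}{1+e_i}$ using $u_i\le e_i$.
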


\begin{proof}
It is enough to prove the bound for optimal configurations satisfying the properties in Lemma~\ref{lem:properties}. Let $C$ be such a configuration.
We will prove that for all $i=1,\ldots,n$ we have:
\[ v_i = \max(0,l_i+u_i-1) \le u_i + \frac{1}{6}(e_i - u_i).\]
Summing this bound over all $i$ gives the desired claim.

If $u_i = l_i = e_i = 0$, then the bound clearly holds. It follows from Lemma~\ref{lem:properties} that the only other case to consider is when $l_i = 2-x_i$ and
$e_i = \frac{l_i+u_i}{x_i}$. It follows from these two equalities that $e_ix_i = l_i+u_i = 2-x_i + u_i$ and so 
\[x_i = \frac{2+u_i}{1+e_i}.\]
Using this expression to bound $v_i$ we get
\[ v_i \le l_i + u_i -1 = 2-x_i + u_i - 1 = 1 + u_i - x_i = 1+u_i - \frac{2+u_i}{1+e_i} = u_i - \frac{1-(e_i-u_i)}{1+e_i}.\] 

We need to prove that 
\[u_i - \frac{1-(e_i-u_i)}{1+e_i} \le u_i + \frac{1}{6}(e_i - u_i),\]
or equivalently
\[ (e_i-u_i)\left(\frac{1}{6} - \frac{1}{1+e_i}\right) + \frac{1}{1+e_i} \ge 0.\]
Since $u_i \le e_i$ (this follows from property 1 in Lemma~\ref{lem:properties} and the fact that $x_i \le 1$), we have two cases to consider.
\begin{description}
\item[Case 1:] $\frac{1}{6} - \frac{1}{1+e_i} \ge 0$. In this case the whole expression is clearly nonnegative.
\item[Case 2:] $\frac{1}{6} - \frac{1}{1+e_i} < 0$, meaning that $e_i \in \{1,2,3,4\}$. In this case we proceed as follows:
\[ (e_i-u_i)\left(\frac{1}{6} - \frac{1}{1+e_i}\right) + \frac{1}{1+e_i} = u_i\left(\frac{1}{1+e_i}-\frac{1}{6}\right) + \frac{e_i}{6} - \frac{e_i-1}{e_i+1}.\]
The first term is clearly nonnegative and the second one can be checked to be nonnegative for $e_i \in \{1,2,3,4\}$. Note that integrality of 
$e_i$ plays a key role here, as the second term is negative for $e_i \in (2,3)$.
\end{description}
\end{proof}

We can show that the above bound is essentially tight
\begin{theorem}
\label{thm:tight}
For any $n \in \mathbb{N}, u^* \in \mathbb{R}_{\ge 0}$, there exists $C \in \CONF(n,u^*)$ such that $\val(C) = u^* + \frac{1}{6}(n-u^*) - O(1)$.
\end{theorem}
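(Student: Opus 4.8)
The plan is to exhibit an explicit configuration matching the upper bound of Theorem~\ref{thm:bound} up to an additive constant. Looking at the proof of Theorem~\ref{thm:bound}, the per-element inequality $v_i \le u_i + \frac{1}{6}(e_i-u_i)$ is tight exactly when the second term in Case~2 vanishes, i.e. when $e_i/6 = (e_i-1)/(e_i+1)$; solving gives $e_i=5$ lying on the boundary $\frac{1}{6}-\frac{1}{1+e_i}=0$, and also $e_i=1$ gives $\frac16 - 0 = \frac16$... actually the clean tight case is $e_i = 5$ with $u_i = 0$, where $x_i = \frac{2}{6} = \frac13$, $l_i = 2 - \frac13 = \frac53$, and $v_i = l_i - 1 = \frac23 = \frac16 \cdot 4 = \frac16(e_i-u_i) - \frac16$... so I should check small boundary cases carefully, but the point is that there is a choice of $(x_i,l_i,u_i)$ with $e_i$ a fixed constant (independent of $n,u^*$) achieving equality, and another choice that efficiently absorbs the mass $u^*$.

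Concretely, I would split the index set $\{1,\ldots,n\}$ into two groups. The first group carries all of the ``mass'': pick a single element (or $O(1)$ elements) with $l_i = 2-x_i$, $u_i = u^*$, and $x_i$ chosen so property~1 of Lemma~\ref{lem:properties} holds with $e_i = \lceil (l_i+u_i)/x_i\rceil$; by the computation in Theorem~\ref{thm:bound}, this element contributes $v_i = u_i - \frac{1-(e_i-u_i)}{1+e_i}$, which for large $u^*$ is $u^* - O(1)$, and it consumes $e_i = u^* + O(1)$ edges. The remaining budget of roughly $n - u^*$ edges is filled with copies of the tight ``$u_i=0$'' element described above (each using a constant number of edges, say $5$, and contributing $\frac{1}{6}$ of those edges to the value), giving a total contribution of $\frac{1}{6}(n-u^*) - O(1)$ from the second group. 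Adding the two groups yields $\val(C) = u^* + \frac{1}{6}(n-u^*) - O(1)$, as required; any rounding slack from non-divisibility of $n-u^*$ by the constant $5$ is absorbed into the $O(1)$ term, possibly by adjusting one extra filler element's parameters.

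The main steps, in order, are: (i) identify the constant $e_0$ and the triple $(x_0,l_0,0)$ for which the per-element bound in Theorem~\ref{thm:bound} is tight (equality in the Case~2 chain) — this is a finite check; (ii) construct the single mass-absorbing element with $u_i = u^*$ and verify its value is $u^* - O(1)$ and its edge count is $u^* + O(1)$, using the identity $x_i = \frac{2+u_i}{1+e_i}$ — here one must be mildly careful that $x_i \le 1$ forces $e_i \ge 1 + u^*$, which is fine and gives exactly the $O(1)$ slack; (iii) fill the rest with $\lfloor (n-u^*)/e_0 \rfloor$ copies of the tight element and absorb the remainder; (iv) sum up and confirm the $O(1)$ error bound, noting $n - u^*$ is nonnegative since the real configurations coming from graphs always satisfy $u^* \le 2(\OPT_{LP}-n)$ and $\sum e_i \le n$ — though in the abstract setting we should just observe the statement is vacuous or trivial when $u^* > n$.

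The part I expect to require the most care is step~(ii): one cannot literally force $u_i = u^*$ on a single element while also insisting $x_i \le 1$ and $e_i$ integral and $e_i = (l_i+u_i)/x_i$ exactly, because these three constraints over-determine the triple. The resolution is to allow $e_i = (l_i+u_i)/x_i$ to be achieved only up to the ceiling, i.e. relax property~1 of Lemma~\ref{lem:properties} for this one element (that lemma gives an optimal configuration satisfying the properties, but for a \emph{lower-bound} construction we are free to use any configuration), or alternatively to use $O(1)$ mass-absorbing elements and distribute $u^*$ among them so that each has rational-compatible parameters. Either way the cost is only an additive $O(1)$, which is exactly the error term we are allowed. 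Everything else is bookkeeping.
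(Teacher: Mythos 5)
Your overall plan---fill most of the edge budget with constant-size, zero-mass items on which the per-element bound of Theorem~\ref{thm:bound} is tight, and absorb the mass $u^*$ with a single item of the form $x_i=1$, $l_i=1$, $u_i=u^*$ (value $u^*$, edge count $\lceil u^*+1\rceil$)---is exactly the paper's construction, and your remark that a lower-bound configuration need not satisfy the normal form of Lemma~\ref{lem:properties} is correct. The gap is in step~(i), which you defer to a ``finite check'' but resolve incorrectly. The equation you write, $\frac{e_i}{6}=\frac{e_i-1}{e_i+1}$, has solutions $e_i=2$ and $e_i=3$, not $e_i=5$; the value $e_i=5$ solves the different equation $\frac{1}{6}-\frac{1}{1+e_i}=0$, i.e.\ it is the Case~1/Case~2 boundary, where the slack $(e_i-u_i)\left(\frac{1}{6}-\frac{1}{1+e_i}\right)+\frac{1}{1+e_i}$ equals $\frac{1}{6}$, not $0$. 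Concretely, your proposed filler $x_i=\frac13$, $l_i=\frac53$, $u_i=0$ has $e_i=5$ and $v_i=\frac23$, whereas the bound allows $\frac56$: it earns only $\frac{2}{15}$ per edge instead of $\frac16$. Filling the remaining $n-u^*-O(1)$ edges with such items therefore loses $\frac{1}{30}(n-u^*)$, which is not $O(1)$, so the construction as literally described does not prove the theorem; indeed your own computation $v_i=\frac16(e_i-u_i)-\frac16$ already exhibits the per-element deficit that you left unresolved.

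The fix is exactly the paper's choice of fillers: $(x_i,l_i,u_i)=\left(\frac23,\frac43,0\right)$ with $e_i=2$, $v_i=\frac13$, and $\left(\frac12,\frac32,0\right)$ with $e_i=3$, $v_i=\frac12$. Both give value exactly $\frac16$ per edge, and since the edge costs are $2$ and $3$, any integer residual budget of at least $2$ can be met exactly, so the only losses are the $O(1)$ slack of the mass-absorbing item (at most $\frac13$ plus rounding) and an $O(1)$ boundary adjustment. With that substitution the remainder of your argument (steps (ii)--(iv)) goes through and coincides with the paper's proof.
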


\begin{proof}
It is quite easy to construct such $C$ by looking at the proof of Theorem~\ref{thm:bound}. We get the first tight example when, in Case 2 of the analysis, we have
 $u_i = 0$ and $e_i \in \{2,3\}$. This corresponds to configurations consisting of elements of the form:
\begin{itemize}
\item $x_i = \frac{2}{3}, l_i = \frac{4}{3}, u_i = 0$, in which case we have $e_i = 2$ and so $u_i+\frac{1}{6}(e_i - u_i) = \frac{1}{3}$ and $v_i = l_i + u_i - 1 = \frac{1}{3}$, or
\item $x_i = \frac{1}{2}, l_i = \frac{3}{2}, u_i = 0$, in which case we have $e_i = 3$ and so $u_i+\frac{1}{6}(e_i - u_i) = \frac{1}{2}$ and $v_i = l_i + u_i - 1 = \frac{1}{2}$.
\end{itemize}
Using these two items we can construct tight examples for $u^* = 0$ and arbitrary $n \ge 2$. 

To handle the case of $u^* > 0$ we need another (almost) tight case in the proof of Theorem~\ref{thm:bound} which occurs when $u_i$ is close to $e_i$ and $e_i$ is relatively large. In this case the value of the expression
\( (e_i-u_i)\left(\frac{1}{6} - \frac{1}{1+e_i}\right) + \frac{1}{1+e_i}\)
is clearly close to $0$. This corresponds to using items of the form $x_i = 1, l_i = 1$ and arbitrary $u_i$. For such elements we have $e_i = \lceil u_i+1 \rceil$ and so
\[ u_i + \frac{1}{6}(e_i - u_i) \le u_i + \frac{1}{3},\]
and
\[ v_i = l_i + u_i - 1 = u_i,\]
so the difference between the two is at most $\frac{1}{3}$. By combining the three types of items described, we can clearly construct $C$ as required for any $n$ and $u^*$.
\end{proof}

We are now ready to prove the Lemma~\ref{lem:f1}.
\begin{proof}[Proof (of Lemma~\ref{lem:f1})]
It follows from Theorem~\ref{thm:bound} and Fact~\ref{fact:cost-vs-value} that 
\[|f'| \le u^* + \frac{1}{6}(n-u^*) = \frac{5}{6} u^* + \frac{1}{6}n.\] 
Using Fact~\ref{fact:u-bound} we get:
\[ |f'| \le \frac{5}{6} \cdot 2 (\OPT_{LP}-n) + \frac{1}{6}n = \frac{5}{3}\OPT_{LP} - \frac{3}{2} n.\]
\end{proof}

\section{New upper bound for $|f''|$}
\label{sec:bound2}
In this section we give a new bound for $|f''|$. We do not bound directly, as in~\ref{lem:f2}. Instead, we show the following.
\begin{lemma}
\label{lem:f2-bound-new} 
\[|f''| \le \frac{5}{6} \left( 2\OPT_{LP}(G) - 2n - u^*\right).\]
\end{lemma}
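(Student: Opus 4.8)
The plan is to charge the cost of the correction $f''$ to the individual vertex-gadgets and to bound each charge by $\tfrac56$ of the Held--Karp slack at that vertex.

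First I would rewrite $|f''|$. Every in-vertex $w$ has exactly one outgoing arc $t_w$, so flow conservation gives $f''(B(w)) = f''(t_w)$ and hence $|f''| = \sum_{w \in \mathcal I} f''(t_w)$; the same conservation applied to $f'$ shows $f'(t_w) = f'(B(w)) \ge f'(\mathrm{in}(w))$, where $\mathrm{in}(w)$ is the incoming tree-arc of $w$, so (up to a harmless modification at the root) the only arcs on which $f'$ can violate a lower bound are the arcs $\mathrm{in}(w)$, each of which is repaired by pushing the deficiency $\delta_w := \max(0,\,1-f'(\mathrm{in}(w)))$ around a cycle. The goal is thus to bound $\sum_w f''(t_w)$ in terms of the $\delta_w$, and then to bound the $\delta_w$ themselves.

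Second I would bound a single deficiency. Suppose $w$ lies in the gadget of an original vertex $v$ and corresponds to the tree-edge $vq$ with $q$ below $v$, and let $S$ be the vertex set of the subtree rooted at $q$. The edges of $G$ in $\delta(S)$ are $vq$ together with all back-edges leaving $S$; the latter fall into the back-edges redirected \emph{into} $w$ (call this set $\mathrm{Back}(w)$) and the back-edges leaving $S$ over $v$, whose capped values $\min(x^*_a,1)$ sum to $f'(\mathrm{in}(w))$. If some back-edge of the second kind has $x^*_a \ge 1$, or if their raw sum already reaches $1$, then $\delta_w=0$; otherwise $f'(\mathrm{in}(w))$ equals that raw sum and the constraint $x^*(\delta(S)) \ge 2$ yields
\[
 \delta_w \;\le\; x^*_{vq} + \sum_{a \in \mathrm{Back}(w)} x^*_a - 1 \;=\; \Big(\sum_{e \in E_w} x^*_e\Big) - 1, \qquad E_w := \{vq\}\cup\mathrm{Back}(w).
\]
For a fixed $v$ the sets $E_w$, $w\in\mathcal I_v$, are pairwise disjoint sets of edges incident to $v$; this is precisely the decomposition of the fractional degree $x^*(v)$ used in the proof of Fact~\ref{fact:u-bound}, where one also has $x^*(v)\ge 2 + \sum_{w\in\mathcal I_v} u_w$. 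Summing over all gadgets, $\sum_{w\in\mathcal I}\delta_w$ is controlled by $\sum_v\big(x^*(v) - 2 - \sum_{w\in\mathcal I_v}u_w\big) = 2\OPT_{LP}(G) - 2n - u^*$.

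Third --- and this is the step I expect to be the main obstacle --- I would control the \emph{cost} of the repairs, i.e.\ the fact that one correcting push deposits $\delta_w$ on the lower tree-arc of \emph{every} gadget that the cycle's tree-path meets, so that naively a repair costs (cycle length)$\cdot\delta_w$ rather than $\delta_w$. I would run the repairs from the root downwards, routing each through a \emph{lowest} back-arc leaving $S$ (one exists since $G$ is $2$-vertex-connected): then the downward part of the cycle stays inside $S$ and the flow it leaves there is exactly what the still-unprocessed deeper in-vertices of $S$ need, so it is reused rather than wasted, while the short upward part is charged to $w$'s own gadget. The resulting amortized charge to $v$'s gadget comes out as a function of the $\delta_w$ and of the numbers $e_w = |B(w)|$ of edges the in-vertices use; plugging in the bound on $\delta_w$ from the second step and the degree inequality, and closing with an integrality case-check on the small values of $e_w$ that mirrors the one in the proof of Theorem~\ref{thm:bound} (the source of the constant $\tfrac16$, hence of the overall $\tfrac56$), yields $|f''| \le \tfrac56\big(2\OPT_{LP}(G) - 2n - u^*\big)$.
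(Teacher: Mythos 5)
The decisive gap is in your second step, which is where the paper's entire proof actually lives. From the subtree cut you derive, per deficient in-vertex $w$, the bound $\delta_w \le x^*(E_w)-1$, i.e.\ $\delta_w \le z_w-1$ in the paper's notation, and you then assert that, by disjointness of the sets $E_w$ and the degree decomposition from Fact~\ref{fact:u-bound}, the sum $\sum_w \delta_w$ is ``controlled by'' $\sum_v \bigl(x^*(v)-2-\sum_{w\in\mathcal{I}_v}u_w\bigr)=2\OPT_{LP}(G)-2n-u^*$. That does not follow. Summing your per-$w$ bound over all non-root in-vertices gives only about $\sum_w(z_w-1)\approx \OPT_{LP}-n$, i.e.\ essentially Lemma~\ref{lem:f2}, which can be far larger than $\frac{5}{6}\bigl(2\OPT_{LP}-2n-u^*\bigr)$ (take $u^*$ near its maximum $2(\OPT_{LP}-n)$, so the right-hand side is near $0$). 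The missing ingredient is a second Held--Karp cut inequality, for the set consisting of $v$ together with its deficient child subtrees, which gives $x^*(v) \ge \sum_{w\in H_v\cup L_v} z_w + \max\bigl(0,\,2-\sum_{w\in H_v\cup L_v}\varepsilon_w\bigr)$; it is this, combined with $\delta_w\le\max(0,1-\varepsilon_w)$ and a case analysis over heavy ($z_w>2$), light and trivial in-vertices, that pays for the ``$-2$'' at each vertex and establishes the local bound $\sum_{w\in\mathcal{I}_v}\max(0,1-\varepsilon_w)\le\frac{5}{6}\bigl(x^*(v)-2-\sum_{w\in\mathcal{I}_v}u_w\bigr)$ of Lemma~\ref{lem:f2-bound-new-local}. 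Already the simplest instance shows you need it: if $\mathcal{I}_v$ has a single light deficient vertex $w$, one must show $1-\varepsilon_w\le\frac{5}{6}(x^*(v)-2)$, which follows from $x^*(v)-2\ge z_w-\varepsilon_w\ge 2(1-\varepsilon_w)$ but from nothing in your outline, since your degree bound $x^*(v)\ge 2+\sum_w u_w$ gives zero slack there. Also, the constant $\frac{5}{6}$ does not come from an integrality check on small $e_w$ mirroring Theorem~\ref{thm:bound}; no such argument appears in this lemma's proof.

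Your third step is likewise not a proof: you flag the amortization of the repair cycles as the main obstacle and then only sketch an idea whose quantitative conclusion (``comes out as \dots\ yields the bound'') is asserted, with the factor $\frac{5}{6}$ and the $u^*$ term appearing out of nowhere. For comparison, the paper performs no cycle-length amortization at all: it charges each non-root in-vertex only its own deficiency $\max(0,1-\varepsilon_w)$, proves the local inequality above, and sums over non-root vertices. So the step you expected to be the crux is absent from the paper's argument, while the step you treated as a routine summation is where all of the work actually is.
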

What this says is basically that $f''$ can be fully paid for by the overlay we get in Fact~\ref{fact:u-bound}. To better understand
this bound, and in particular the constant $\frac{5}{6}$, before we proceed to prove it, let us first show how it can be used.
\begin{corollary}
\label{cor:f-bound-new}
$|f| \le \frac{5}{3} \OPT_{LP} - \frac{3}{2} n$.
\end{corollary}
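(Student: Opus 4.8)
The plan is to combine the two bounds we already have: $|f| \le |f'| + |f''|$, Lemma~\ref{lem:f1}, and the new Lemma~\ref{lem:f2-bound-new}. First I would recall that the construction guarantees $|f| \le |f'| + |f''|$, where $f = f' + f''$ is the feasible circulation in $C(G,T)$. Then I would plug in the bound $|f'| \le \frac{5}{6}u^* + \frac{1}{6}n$ (which is the intermediate inequality obtained in the proof of Lemma~\ref{lem:f1}, before invoking Fact~\ref{fact:u-bound}) together with $|f''| \le \frac{5}{6}(2\OPT_{LP}(G) - 2n - u^*)$ from Lemma~\ref{lem:f2-bound-new}.

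The key observation is that the $u^*$ terms are designed to cancel: adding the two bounds gives
\[
|f| \le \frac{5}{6}u^* + \frac{1}{6}n + \frac{5}{6}\left(2\OPT_{LP}(G) - 2n - u^*\right) = \frac{5}{6}\cdot 2\OPT_{LP}(G) + \frac{1}{6}n - \frac{5}{3}n.
\]
Simplifying the right-hand side yields $\frac{5}{3}\OPT_{LP}(G) + \left(\frac{1}{6} - \frac{5}{3}\right)n = \frac{5}{3}\OPT_{LP} - \frac{3}{2}n$, which is exactly the claimed bound. So the whole proof is essentially a two-line algebraic combination.

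The only subtlety — and the step I would be most careful about — is making sure I use the right form of the $|f'|$ bound: I must use the sharper intermediate estimate $|f'| \le \frac{5}{6}u^* + \frac{1}{6}n$ rather than the fully reduced statement $|f'| \le \frac{5}{3}\OPT_{LP} - \frac{3}{2}n$ of Lemma~\ref{lem:f1}, since the latter has already "spent" the quantity $u^*$ via Fact~\ref{fact:u-bound}, and we need that same $u^*$ available to absorb $|f''|$. Double-spending $u^*$ (once in $|f'|$ and once implicitly) is precisely what the corollary avoids, and this is where the phrase ``$f''$ can be fully paid for by the overlay'' becomes precise. Once that bookkeeping is set up correctly, there is no real obstacle; the cancellation is immediate.

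Concretely, I would write: by Lemma~\ref{lem:f1} (more precisely, its proof) we have $|f'| \le \frac{5}{6}u^* + \frac{1}{6}n$, and by Lemma~\ref{lem:f2-bound-new} we have $|f''| \le \frac{5}{6}(2\OPT_{LP}(G) - 2n - u^*)$; since $|f| \le |f'| + |f''|$, summing gives
\[
|f| \le \frac{1}{6}n + \frac{5}{6}\left(2\OPT_{LP}(G) - 2n\right) = \frac{5}{3}\OPT_{LP} - \frac{3}{2}n,
\]
as required.
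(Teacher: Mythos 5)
Your proof is correct and is essentially identical to the paper's own argument: both combine $|f| \le |f'| + |f''|$ with the intermediate estimate $|f'| \le \frac{5}{6}u^* + \frac{1}{6}n$ and Lemma~\ref{lem:f2-bound-new}, letting the $u^*$ terms cancel. Your remark about not ``double-spending'' $u^*$ correctly identifies the only bookkeeping subtlety, and it is handled exactly as in the paper.
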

\begin{proof}
We have 
$ |f| \le |f'| + |f''| \le \frac{5}{6} u^* + \frac{1}{6}n + \frac{5}{6}\left(2\OPT_{LP} - 2n - u^*\right) = \frac{5}{3} \OPT_{LP} - \frac{3}{2} n.$
\end{proof}
There are several interesting things to note here. First of all, we got the exact same bound as in Lemma~\ref{lem:f1}, which means that $|f''|$ can be fully paid for by the overlay in Fact~\ref{fact:u-bound}, as suggested earlier. In particular, this means that improving the constant $\frac{5}{6}$ in Lemma~\ref{lem:f2-bound-new} is pointless, since we would still be getting the same bound on $|f|$ when $|f''| = 0$. Therefore, we do not try to optimize this constant, but instead make the proof of the Lemma as straightforward as possible.

Let us now proceed to prove Lemma~\ref{lem:f2-bound-new}. For any non-root in-vertex $w$ let $z_w = x^*_{t_w} + x^*(B(w))$. Basically, if $v$ is the father of $w$ in $\vec{T}$, then $z_w$ is the total value of $x^*$ over all edges connecting $v$ with vertices in the subtree $T_w$ of $T$ determined by $w$. Also, let $\varepsilon_w$ be the total of $x^*$ over all edges connecting vertices in $T_w$ with vertices above $v$. 

We can formulate the following local version of Lemma~\ref{lem:f2-bound-new}.
\begin{lemma}
\label{lem:f2-bound-new-local}
For every non-root vertex $v$ of $G$ we have
\[ \sum_{w \in \mathcal{I}_v} \max(0,1-\varepsilon_w) \le \frac{5}{6}\left( x^*(v) - 2 - \sum_{w \in \mathcal{I}_v} u_w\right) .\]
\end{lemma}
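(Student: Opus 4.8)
The plan is to reduce the left-hand side to quantities governed by Held--Karp cut inequalities and then to "pay'' for the term $-2$ and for the terms $-u_w$ one in-vertex at a time, using a single well-chosen extra cut. Fix a non-root vertex $v$, abbreviate the left-hand side by $L=\sum_{w\in\mathcal{I}_v}\max(0,1-\varepsilon_w)$, and write $a_v$ for the total $x^*$-weight of the edges joining $v$ to vertices strictly above it in $\vec{T}$. Since $T$ is a DFS tree, every edge of $G$ incident to $v$ goes either to a vertex above $v$ or into the subtree of one of $v$'s children, so $x^*(v)=a_v+\sum_{w\in\mathcal{I}_v}z_w$. First basic fact: for $w\in\mathcal{I}_v$ the boundary $\delta(V(T_w))$ consists exactly of the $z_w$-edges and the $\varepsilon_w$-edges (there are no cross edges in a DFS tree), so the subtour constraint gives $z_w+\varepsilon_w\ge 2$; hence whenever $\varepsilon_w<1$ we get $z_w\ge 1+(1-\varepsilon_w)$. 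Second basic fact: the argument establishing Fact~\ref{fact:u-bound} gives, for every in-vertex with $u_w>0$, the sharper inequality $z_w-u_w\ge 2$. Combining these, $z_w-u_w\ge 1+\max(0,1-\varepsilon_w)$ for every $w\in\mathcal{I}_v$ with $\varepsilon_w<1$ (if $u_w=0$ this is the subtour constraint, if $u_w>0$ it follows from $z_w-u_w\ge 2\ge 1+(1-\varepsilon_w)$), while $z_w-u_w=z_w\ge 0$ for the remaining in-vertices.

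Now partition $\mathcal{I}_v$ into the three sets $W=\{w:\varepsilon_w<1\}$ (exactly the in-vertices contributing to $L$), $Q=\{w:\varepsilon_w\ge 1,\ u_w>0\}$, and $R=\{w:\varepsilon_w\ge 1,\ u_w=0\}$. Expanding and inserting the estimates of the first paragraph ($z_w-u_w\ge 1+\max(0,1-\varepsilon_w)$ on $W$, so $\sum_{w\in W}(z_w-u_w)\ge|W|+L$; $z_w-u_w\ge 2$ on $Q$; $z_w-u_w=z_w\ge 0$ on $R$) yields
\[
x^*(v)-2-\sum_{w\in\mathcal{I}_v}u_w \;=\; a_v+\sum_{w\in\mathcal{I}_v}(z_w-u_w)-2 \;\ge\; a_v+|W|+L+2|Q|+\sum_{w\in R}z_w-2 .
\]

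It remains to show this lower bound is at least $\tfrac{6}{5}L$, which after multiplication by $\tfrac56$ is exactly Lemma~\ref{lem:f2-bound-new-local}. If $Q\neq\emptyset$ I would simply drop $a_v\ge 0$ and $\sum_{w\in R}z_w\ge 0$ and use $|Q|\ge 1$, obtaining $x^*(v)-2-\sum u_w\ge |W|+L\ge\tfrac{6}{5}L$ (using $L\le|W|$, since $\varepsilon_w\ge 0$). If $Q=\emptyset$, this is where the extra cut enters: apply the subtour constraint to $S=\{v\}\cup\bigcup_{w\in W}V(T_w)$, a nonempty proper subset of $V$ because $v$ is not the root. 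Absence of cross edges makes $\delta(S)$ the disjoint union of the $a_v$-edges, the $z_w$-edges for $w\notin W$ (which here are exactly $w\in R$), and the $\varepsilon_w$-edges for $w\in W$; using $\varepsilon_w=1-\max(0,1-\varepsilon_w)$ on $W$ this reads $a_v+\sum_{w\in R}z_w+|W|-L\ge 2$, i.e. $a_v+\sum_{w\in R}z_w\ge 2-|W|+L$, and substituting into the display gives $x^*(v)-2-\sum u_w\ge 2L\ge\tfrac{6}{5}L$, which finishes the proof. The step I expect to be the genuinely delicate one is the treatment of the set $Q$ — the "leaky'' in-vertices, which subtract $u_w$ from the right-hand side while contributing nothing to $L$; it is only the per-vertex estimate $z_w-u_w\ge 2$ borrowed from the proof of Fact~\ref{fact:u-bound} that keeps them harmless. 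The choice of the cut $S$ — the vertex $v$ together with precisely the subtrees of the in-vertices that "want'' correction — is the one new idea, and it is what makes the constant $2$ (hence $\tfrac56\cdot 2=\tfrac53\ge 1$) come out.
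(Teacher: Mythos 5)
Your proof is correct and follows essentially the same route as the paper's: the per-subtree Held--Karp cuts giving $z_w+\varepsilon_w\ge 2$, the bound $z_w-u_w\ge 2$ whenever $u_w>0$ (borrowed from the proof of Fact~\ref{fact:u-bound}), and the combined cut around $\{v\}\cup\bigcup_{w:\varepsilon_w<1}T_w$ are exactly the paper's Observations 1--2 and its special treatment of trivial vertices with $u_w>0$. Your case split on whether $Q=\emptyset$ (rather than on $\sum_w\varepsilon_w$ versus $2$) is a mild streamlining of the same argument, and in fact yields the stronger local constant $\tfrac12$ in place of $\tfrac56$, which the paper notes is immaterial for the final bound.
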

Notice that Lemma~\ref{lem:f2-bound-new} easily follows from Lemma~\ref{lem:f2-bound-new-local} by summing over all non-root vertices.

\begin{proof}[Proof (of Lemma~\ref{lem:f2-bound-new-local})]
Let $v$ be a non-root vertex of $G$. We define $3$ types of vertices in $\mathcal{I}_v$:
\begin{itemize}
\item $w \in \mathcal{I}_v$ is \emph{heavy} if $\varepsilon_w < 1$ and $z_w > 2$
\item $w \in \mathcal{I}_v$ is \emph{light} if $\varepsilon_w < 1$ and $z_w \le 2$,
\item $w \in \mathcal{I}_v$ is \emph{trivial} otherwise (i.e.\ $\varepsilon_w \ge 1$).
\end{itemize}
We denote by $H_v$ and $L_v$ the sets of heavy and light vertices in $\mathcal{I}_v$, respectively. Intuitively, heavy vertices are
the ones that contribute to both $u^*$ and $|f''|$, light vertices contribute only to $|f''|$, and the remaining vertices are trivial.

We are going to use the following observations:
\begin{enumerate}
\item $z_w \ge 2-\varepsilon_w$ for all $w \in H_v \cup L_v$,
\item $x^*(v) \ge \sum_{w \in H_v \cup L_v} z_w + \max(0,2-\sum_{H_v \cup L_v} \varepsilon_w)$.
\end{enumerate}
The first observation follows from the Held-Karp inequality for the cut induced by the subtree $T_w$ of $T$ determined by $w$. The second follows from Held-Karp inequality as well, this time for the cut induced by the set $\bigcup_{w \in H_v \cup L_v} T_w \cup \{v\}$. The only edges crossing this cut are the back-edges with total $x^*$ value $\sum_{H_v \cup L_v} \varepsilon_w$, and edges incident to $v$, but not to a vertex from a subtree induced by one of $w \in H_v \cup L_v$. The second term in the second observation is a lowerbound on the total $x^*$ value of this second kind of edges.

Note that the trivial vertices might have $z_w > 2$ and so contribute to $u^*$. However in that case the proof is quite simple and it will be advantageous for us to
get it out of our way. Let $w_0$ be a trivial vertex with $z_{w_0} > 2$. What we do is basically use this vertex to cancel out the lone $2$ in the RHS of the bound:

\[ x^*(v) - 2 - \sum_{w \in \mathcal{I}_v} u_w \ge \sum_{w \in \mathcal{I}_v \setminus w_0 } z_w + z_{w_0} - 2 - \sum_{w \in \mathcal{I}_v \setminus w_0} u_w - u_{w_0} = \sum_{w \in \mathcal{I}_v \setminus w_0} (z_w-u_w)\]
Since $w_0 \not \in H_v \cup L_v$ we thus have
\[ \frac{5}{6} \left( x^*(v) - 2 - \sum_{w \in \mathcal{I}_v} u_w \right) \ge \sum_{w \in H_v \cup L_v} \frac{5}{6}(z_w - u_w) \ge \sum_{w \in H_v \cup L_v} (1-\varepsilon_w).\]
The last inequality holds because we have $z_w - u_w = 2$ for heavy $w$ and $z_w - u_w = z_w \ge 2-\varepsilon_w$ for light $w$. 
We can thus assume that all trivial vertices have $z_w \le 2$ (and so $u_w = 0$).

Note that using our observations, we can reformulate our claim as follows:
 \[ \sum_{w \in H_v \cup L_v} (1 - \varepsilon_w) \le \frac{5}{6} \left( \sum_{w \in H_v \cup L_v} z_w + \max\left(0,2-\sum_{w \in H_v \cup L_v} \varepsilon_w\right) - 2 - \sum_{w \in \mathcal{I}_v} u_w \right).\]
and since we now assume that trivial vertices have $z_w \le 2$, it is enough to prove:
\[\sum_{w \in H_v \cup L_v} (1 - \varepsilon_w) \le \frac{5}{6} \left( \sum_{w \in L_v} z_w + \max\left(0,2-\sum_{w \in H_v \cup L_v} \varepsilon_w\right) + 2(|H_v|-1)\right)\]
(since $z_w = 2+u_w$ for $w \in H_v$).

Clearly, if all $w \in \mathcal{I}_v$ are trivial, both sides of the bound are $0$ and so it trivially holds. Otherwise, we consider the following two cases:

\begin{description}
\item[Case 1:] $\sum_{w \in H_v \cup L_v} \varepsilon_w > 2$. Notice that this implies $|H_v|+|L_v| \ge 3$. In this case the RHS of the bound becomes
\[ \frac{5}{6} \left( \sum_{w \in L_v} z_w + 2(|H_v|-1)\right) \ge \frac{5}{6} \left( \sum_{w \in L_v} (2-\varepsilon_w) + 2(|H_v|-1)\right). \]

The ratio of the above expression and the LHS is lowerbounded by the ratio of these same expressions with all $\varepsilon_w = 0$, i.e.\ $\frac{5}{6} \cdot \frac{2(|L_v|+|H_v|-1)}{|L_v|+|H_v|}$, which is definitely at least $1$, since $|L_v|+|H_v| \ge 3$.

\item[Case 2:] $\sum_{w \in H_v \cup L_v} \varepsilon_w \le 2$. In this case the RHS of the bound becomes
\[\frac{5}{6} \left( \sum_{w \in L_v} z_w + 2-\sum_{w \in H_v \cup L_v} \varepsilon_w + 2(|H_v|-1)\right)  \ge 
  \frac{5}{6} \left( \sum_{w \in L_v} (2-2\varepsilon_w) + \sum_{w \in H_v} (2-\varepsilon_w)\right).\]
The claim now follows by observing that $(2-2\varepsilon_w) = 2(1-\varepsilon_w)$ and $2-\varepsilon_w \ge 2(1-\varepsilon_w)$.

\end{description}

\end{proof}

\section{Applications to graphic TSP and TSPP}
\label{sec:applications}

As a consequence of Corollary~\ref{cor:f-bound-new}, we get improved approximation factors for graphic TSP and graphic TSPP.
\begin{theorem}
\label{thm:main-tour}
There is a $\frac{13}{9}$-approximation algorithm for graphic TSP.
\end{theorem}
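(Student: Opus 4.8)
The plan is essentially to feed Corollary~\ref{cor:f-bound-new} into Theorem~\ref{thm:tour} after carrying out the standard reductions. First I would apply Lemma~\ref{lem:two-connect} to reduce to the case where $G$ is $2$-vertex connected, and replace $G$ by the support $G'=(V,E')$ of an extreme optimal solution $x^*$ of $LP(G)$; since $x^*$ stays optimal for $LP(G')$, a tour that is $r$-approximate with respect to $\OPT_{LP}(G')$ is $r$-approximate with respect to $\OPT_{LP}(G)$, and by Theorem~\ref{thm:cornuejols} we may now assume $|E|\le 2n-1$. I would then take $T$ to be the DFS tree that always follows the edge of largest $x^*$-value, build the network $C(G,T)$, and let $f=f'+f''$ be the circulation constructed in Section~2.

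By Corollary~\ref{cor:f-bound-new} we have $|f|\le \tfrac{5}{3}\OPT_{LP}-\tfrac{3}{2}n$. Plugging $C^*=f$ into Theorem~\ref{thm:tour} produces a spanning Eulerian multigraph, and hence a TSP tour, of cost at most
\[\frac{4}{3}n+\frac{2}{3}|f|-\frac{2}{3}\le \frac{4}{3}n+\frac{2}{3}\left(\frac{5}{3}\OPT_{LP}-\frac{3}{2}n\right)-\frac{2}{3}=\frac{1}{3}n+\frac{10}{9}\OPT_{LP}-\frac{2}{3}.\]
To convert this into a multiplicative guarantee I would use the two elementary facts that $\OPT_{LP}\ge n$ (summing the Held--Karp constraints over singletons gives $2\OPT_{LP}=\sum_{v}x^*(v)\ge 2n$) and that $\OPT\ge\OPT_{LP}$, where $\OPT$ denotes the length of an optimal tour (an optimal tour is a connected spanning Eulerian multigraph, whose edge-multiplicity vector is a feasible point of $LP(G)$). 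The first fact gives $\tfrac{1}{3}n\le\tfrac{1}{3}\OPT_{LP}$, so the tour cost above is at most $\tfrac{1}{3}\OPT_{LP}+\tfrac{10}{9}\OPT_{LP}-\tfrac{2}{3}=\tfrac{13}{9}\OPT_{LP}-\tfrac{2}{3}\le\tfrac{13}{9}\OPT_{LP}\le\tfrac{13}{9}\OPT$. Since the DFS tree, the LP solution, the circulation $f$, and the multigraph of Theorem~\ref{thm:tour} are all computable in polynomial time, this yields a polynomial-time $\tfrac{13}{9}$-approximation.

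I do not expect a genuine obstacle at this stage: all the difficulty has already been absorbed into Lemma~\ref{lem:f1} and Lemma~\ref{lem:f2-bound-new} (the two-dimensional knapsack analysis of $|f'|$ and the ``for free'' accounting of $|f''|$), which together give Corollary~\ref{cor:f-bound-new}. The only points requiring care are (i) checking that the chain of reductions really preserves the approximation ratio when measured against $\OPT_{LP}(G)$, which is exactly what Lemma~\ref{lem:two-connect} together with the support argument around Theorem~\ref{thm:cornuejols} guarantees, and (ii) comparing the tour length against $\OPT_{LP}$ rather than against $\OPT$ directly, so that the bound $\OPT_{LP}\ge n$ can be used to absorb the additive $\tfrac{1}{3}n$ term into a multiplicative $\tfrac{1}{3}\OPT_{LP}$. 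It is worth remarking that the ratio $\bigl(\tfrac{1}{3}n+\tfrac{10}{9}\OPT_{LP}\bigr)/\OPT_{LP}$ is decreasing in $\OPT_{LP}$ and is worst exactly at $\OPT_{LP}=n$, which is precisely where the constant $\tfrac{13}{9}$ arises; in contrast to the TSPP case, no auxiliary Christofides-type bound is needed here.
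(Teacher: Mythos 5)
Your proposal is correct and follows essentially the same route as the paper: plug the bound $|f|\le\frac{5}{3}\OPT_{LP}-\frac{3}{2}n$ from Corollary~\ref{cor:f-bound-new} into Theorem~\ref{thm:tour} and observe that the resulting ratio is worst when $\OPT_{LP}=n$, giving $\frac{10}{9}+\frac{1}{3}=\frac{13}{9}$. Your additional spelling-out of the reductions (Lemma~\ref{lem:two-connect}, the support argument via Theorem~\ref{thm:cornuejols}, and the bounds $\OPT_{LP}\ge n$ and $\OPT\ge\OPT_{LP}$) is exactly what the paper leaves implicit from its Preliminaries.
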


\begin{proof}
Using the bound of Corollary~\ref{cor:f-bound-new} we get
\[ |f| \le \frac{5}{3}\OPT_{LP} - \frac{3}{2} n.\]
The TSP tour guaranteed by Theorem~\ref{thm:tour} has size at most
\[ \frac{4}{3} n + \frac{2}{3}|f| \le \frac{4}{3}n + \frac{2}{3}\left( \frac{5}{3}\OPT_{LP} - \frac{3}{2}n \right) = \frac{10}{9}\OPT_{LP}+\frac{1}{3}n.\]
Notice that the approximation ratio of the resulting algorithm is getting better with $\OPT_{LP}$ increasing (with fixed $n$). Therefore the worst case bound is the one we get for $\OPT_{LP}=n$, i.e.\ $\frac{10}{9}+\frac{1}{3} = \frac{13}{9}$.
\end{proof}

\begin{remark}
This analysis is significantly simpler than the one in~\cite{momke}. Balancing with Christofides's algorithm is no longer necessary since bounds on approximation ratios for both algorithms are decreasing in $\OPT_{LP}$. 
\end{remark}

\begin{theorem}
\label{thm:main-path}
There is a $\frac{19}{12}+\varepsilon$-approximation algorithm for graphic TSPP, for any $\varepsilon > 0$.
\end{theorem}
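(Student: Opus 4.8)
The plan is to follow M\"omke and Svensson's treatment of graphic TSPP~\cite{momke}, but fed with the sharper circulation bound of Corollary~\ref{cor:f-bound-new}. First I would invoke Lemma~\ref{lem:two-connect} to reduce to a $2$-vertex-connected graph, and, as in the discussion preceding Theorem~\ref{thm:cornuejols}, pass to $G'=(V,E\cup\{e'\})$ with $e'=\{s,t\}$ and replace $G'$ by the support of an extreme optimal solution of $LP(G')$; after an arbitrarily small perturbation we may assume $e'$ lies in that support, so $G'$ is still $2$-vertex connected, has at most $2n-1$ edges, and contains $e'$. Building a DFS tree $T$ of $G'$ by following the heaviest edges and forming the circulation $f=f'+f''$ in $C(G',T)$ exactly as in the preliminaries, Corollary~\ref{cor:f-bound-new} applied to $G'$ gives $|f|\le\frac53\OPT_{LP}(G')-\frac32 n$. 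Plugging $C^*=f$ into Theorem~\ref{thm:path} and writing $d=\textrm{dist}_G(s,t)$, we obtain an $s$-$t$ path of cost at most
\[
\frac43 n+\frac23|f|-\frac23+d\ \le\ \frac13 n+\frac{10}{9}\OPT_{LP}(G')-\frac23+d.
\]
Using $\OPT_{LP}(G')\le\OPT_{LP}(G,s,t)+1$ and then $\OPT_{LP}(G,s,t)\ge n-1$, the right-hand side is at most $\frac{13}{9}\OPT_{LP}(G,s,t)+d+O(1)$. In particular, once $d$ is below a small constant fraction of $n$ (hence of $\OPT_{LP}(G,s,t)$, which is at least $n-1$) -- roughly $d\le\frac{5}{36}n$ -- this already yields a path of cost $\le\frac{19}{12}\OPT_{LP}(G,s,t)+O(1)$.

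The entire difficulty, and the source of the $+\varepsilon$, is the leftover term $d$: this construction is useless when $s$ and $t$ are far apart. Following~\cite{momke}, I would fix a threshold $\beta$, run the circulation algorithm above when $d\le\beta n$, and otherwise run a second, doubling-based algorithm that is cheap precisely when $s,t$ are far: take a shortest $s$-$t$ path $Q$ (with $d$ edges), extend it to a spanning tree $T_0\supseteq Q$, and double every edge of $T_0\setminus Q$ while keeping the $d$ edges of $Q$ single. The resulting multigraph is connected and spanning with exactly $s$ and $t$ of odd degree, hence has an Eulerian $s$-$t$ walk of length $2(n-1)-d\le 2\OPT_{LP}(G,s,t)-d$, which beats $\frac{19}{12}\OPT_{LP}(G,s,t)$ once $d$ is a large enough fraction of $n$. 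One also uses an intermediate Christofides-type construction for the path -- a spanning tree plus a minimum $T$-join for $T=(\textrm{odd vertices of the tree})\,\triangle\,\{s,t\}$, whose cost one bounds via a suitable point of the $T$-join polyhedron (half an optimum of $LP(G,s,t)$ plus half the incidence vector of $Q$), giving $\le(n-1)+\frac12\OPT_{LP}(G,s,t)+\frac12 d$. Returning the cheapest solution produced and balancing the thresholds gives the claimed ratio, and the additive $\varepsilon$ absorbs both the lower-order $O(1)$ terms and a discretization step (as in~\cite{momke}) used to knit the regimes together.

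The step I expect to be the main obstacle is exactly this knitting together of the regimes. The circulation bound degrades linearly in $d/n$ while the doubling bound improves only linearly in it, so a crude case split does not by itself reach $\frac{19}{12}$; one needs the careful balance of~\cite{momke}, and the point of the present paper is that substituting Corollary~\ref{cor:f-bound-new} for the weaker bound used there moves the resulting constant from $3-\sqrt2$ down to $\frac{19}{12}$. As in the TSP case (Theorem~\ref{thm:main-tour}), one must also keep track of which individual bounds improve as $\OPT_{LP}(G,s,t)$ grows and which do not, so that the worst case is genuinely located by an optimization rather than guessed.
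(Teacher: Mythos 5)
Your overall plan coincides with the paper's: reduce to the $2$-vertex-connected case, add $e'=\{s,t\}$, bound the circulation in $C(G',T)$ by Corollary~\ref{cor:f-bound-new}, turn it into an $s$-$t$ path via Theorem~\ref{thm:path}, and play this off against the folklore doubled-spanning-tree-minus-shortest-path algorithm, using $\OPT_{LP}(G')-1\le\OPT_{LP}(G,s,t)$ as the lower bound. The problem is that the step you explicitly leave open -- ``knitting together of the regimes'' -- is exactly the content of the paper's proof, and with the numbers you are working with it cannot be carried out. Reading the path bound with the additive term $+d$ (as in the printed statement of Theorem~\ref{thm:path}) and taking the worst case $\OPT_{LP}\approx n$, your circulation-based solution of cost about $\frac{13}{9}n+d$ is within $\frac{19}{12}\OPT_{LP}$ only for $d\lesssim\frac{5}{36}n$, your Hoogeveen-type solution $(n-1)+\frac12\OPT_{LP}+\frac12 d$ only for $d\lesssim\frac16 n$, and the doubling solution $2n-2-d$ only for $d\gtrsim\frac{5}{12}n$; in the middle range (e.g.\ $d=\frac{n}{3}$) the best of all three is about $\frac53\OPT_{LP}$. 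No balancing trick repairs this, because the obstruction is the slope of the first bound in $d$, not the choice of thresholds, so deferring to ``the careful balance of~\cite{momke}'' does not fill the gap.

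What the paper actually does differently is quantitative, not an extra algorithm: it applies the path bound in the form $\frac43 n+\frac23|f|-\frac23+\frac{d}{3}$, i.e.\ with dependence $\frac{d}{3}$ rather than $d$ (this is the dependence its own calculation uses, and the one consistent with M\"omke--Svensson's $3-\sqrt2$ bound; the $+\mathrm{dist}$ in the printed statement of Theorem~\ref{thm:path} does not match the way it is invoked). Under that bound the circulation algorithm deteriorates three times more slowly in $d$, stays below $\frac{19}{12}$ up to the crossover point $d=\frac54 n-\frac56\OPT_{LP}-1$ where it meets $2n-2-d$, and equating the two ratios and using $\OPT_{LP}\ge n$ gives $\frac34+\frac56+O(\frac1n)=\frac{19}{12}+O(\frac1n)$; only two algorithms are needed, and your $T$-join/Christofides-type third algorithm is superfluous. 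As written, your argument establishes roughly a $\frac53$ ratio, not $\frac{19}{12}$: to finish you must either justify and use the $\frac{d}{3}$ form of the path bound and then do the explicit two-way balance over $d$, or supply some genuinely new idea for the intermediate range of $d$.
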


\begin{proof}
This proof is very similar to the proof of Theorem 1.2 in~\cite{momke}. However, the reasoning is slighly simpler, in our opinion.
Suppose we want to approximate the graphic TSPP in $G=(V,E)$ with endvertices $s$ and $t$. Let $G'=(V,E \cup \{e'\})$, where $e'=\{s,t\}$, and let $\OPT_{LP}$ denote
$\OPT_{LP}(G')$. Also, let $d$ be the distance between $s$ and $t$ in $G$.
Using the bound of Corollary~\ref{cor:f-bound-new} we get
\[ |f| \le \frac{5}{3}\OPT_{LP} - \frac{3}{2} n.\]
The TSP path guaranteed by Theorem~\ref{thm:path} has size at most
\[ \frac{4}{3} n + \frac{2}{3}|f| - \frac{2}{3} + \frac{d}{3} \le \frac{4}{3}n + \frac{2}{3}\left( \frac{5}{3}\OPT_{LP} - \frac{3}{2}n \right) - \frac{2}{3} + \frac{d}{3} = \frac{10}{9}\OPT_{LP}+\frac{n+d-2}{3}.\]
It is clear that the quality of this algorithm deteriorates as $d$ increases. We are going to balance it with another algorithm that displays the opposite behaviour. The following approach is folklore: Find a spanning tree $T$ in $G$ and double all edges of $T$ except those that lie on the unique shortest path connecting $s$ and $t$. The resulting graph has a spanning Eulearian path connecting $s$ and $t$ with at most $2(n-1)-d$ edges. 

Since $\OPT_{LP}-1 \le \OPT_{LP}(G,s,t)$ is a lower bound for the optimal solution, the two approximation algorithms have approximation ratios bounded by
\[ \frac{\frac{10}{9}\OPT_{LP}+\frac{n+d-2}{3}}{\OPT_{LP}-1}\]
and
\[ \frac{2n-2-d}{\OPT_{LP}-1}.\]

For a fixed value of $\OPT_{LP}$ the first of these expressions is increasing and the second is decreasing in $d$. Therefore the worst case bound we get for an algorithm that picks the best of the two solutions occurs when 
\[ \frac{10}{9}\OPT_{LP}+\frac{n+d-2}{3} = 2n-2-d,\]
which leads to
\[ d = \frac{5}{4}n - \frac{5}{6}\OPT_{LP}-1.\]
For this value of $d$ the approximation ratio is at most
\[ \frac{2n-2-\left(\frac{5}{4}n - \frac{5}{6}\OPT_{LP}-1\right)}{\OPT_{LP}-1} = \frac{\frac{3n}{4}-1+\frac{5}{6}\OPT_{LP}}{\OPT_{LP}-1}=
\frac{\frac{3n}{4}-\frac{1}{6}}{\OPT_{LP}-1} + \frac{5}{6}.\]
Since $\OPT_{LP} \ge n$ this is at most
\[\frac{\frac{3n}{4} - \frac{1}{3}}{n-1} + \frac{5}{6} = \frac{3}{4}+\frac{5}{6}+O\left(\frac{1}{n}\right) = \frac{19}{12}+O\left(\frac{1}{n}\right),\]
which proves the claim.
\end{proof}

\bibliographystyle{abbrv}
%\bibliography{tsp}

\begin{thebibliography}{10}

\bibitem{christofides}
N.~Christofides.
\newblock Worst-case analysis of a new heuristic for the travelling salesman
  problem.
\newblock Technical Report 388, Graduate School of Industrial Administration,
  CMU, 1976.

\bibitem{cornuejols}
G.~Cornuejols, D.~Naddef, and J.~Fonlupt.
\newblock The traveling salesman problem on a graph and some related integer
  polyhedra.
\newblock {\em Math Programming}, 33:1--27, 1985.

\bibitem{singh}
S.~O. Gharan, A.~Saberi, and M.~Singh.
\newblock A randomized rounding approach to the travelling salesman problem.
\newblock In {\em FOCS'11 (to appear)}, 2011.

\bibitem{goemans}
M.~X. Goemans and D.~J. Bertsimas.
\newblock On the parsimonious property of connectivity problems.
\newblock In {\em SODA'90}, pages 388--396, 1990.

\bibitem{grigni}
M.~Grigni, E.~Koutsoupias, and C.~H. Papadimitriou.
\newblock An approximation scheme for planar graph tsp.
\newblock In {\em FOCS'95}, pages 640--645, 1995.

\bibitem{held}
M.~Held and R.~M. Karp.
\newblock The traveling-salesman problem and minimum spanning trees.
\newblock {\em Operations Research}, 18(6):1138--1162, 1970.

\bibitem{hoogeveen}
J.~A. Hoogeveen.
\newblock Analysis of christofides’ heuristic: some paths are more difficult
  than cycles.
\newblock {\em Operations Research Letters}, 10(5):291–--295, 1991.

\bibitem{momke}
T.~M\"omke and O.~Svensson.
\newblock Approximating graphic tsp by matchings.
\newblock In {\em FOCS'11 (to appear)}, 2011.

%\bibitem{mucha1}
%M.~Mucha.
%\newblock Improved analysis for graphic tsp approximation via matchings.
%\newblock {\em CoRR}, abs/1108.1130, 2011.

\bibitem{shmoys}
D.~B. Shmoys and D.~P. Williamson.
\newblock Analyzing the held-karp tsp bound: a monotonicity property with
  application.
\newblock {\em Information Processing Letters}, 35(6):281 -- 285, 1990.

\end{thebibliography}

\end{document}